\newtheorem{theorem}{Theorem}
\newtheorem{definition}{Definition}
\newtheorem{assumption}{Assumption}
\newtheorem{lemma}{Lemma}
\renewcommand{\vec}[1]{\mathbf{#1}}
\newtheorem{remark}{Remark}
\newcommand*{\rom}[1]{\expandafter\@slowromancap\romannumeral #1@}
\newenvironment{proof}{\textit{Proof:}}
\begin{document}

\title{Formation Control of Nonlinear Multi-Agent \\ Systems Using Three-Layer Neural Networks}
\author{Kiarash Aryankia$^{ \href{https://orcid.org/0000-0002-4751-3925}{\includegraphics[scale=.04]{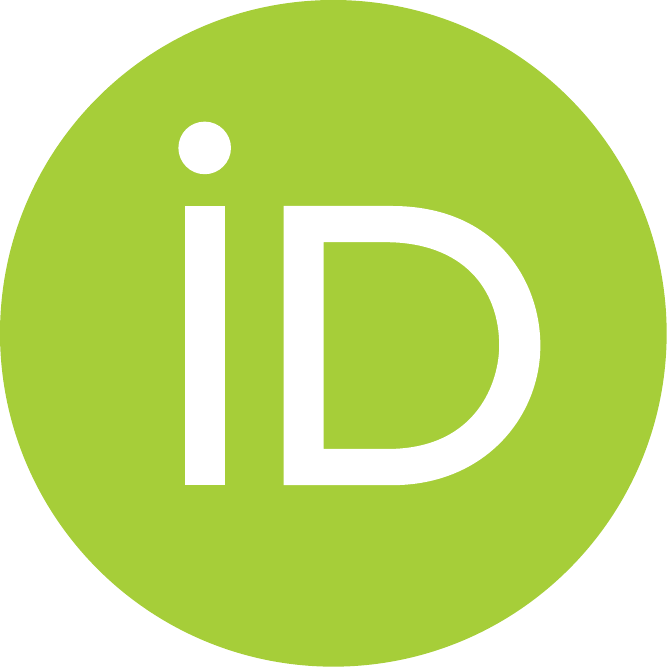}}}$, \IEEEmembership{Student Member, IEEE} and Rastko R. Selmic$^{\href{https://orcid.org/0000-0001-9345-8077}{\includegraphics[scale=.04]{Pictures/ORCID.pdf}}}$, \IEEEmembership{Senior Member, IEEE}
%\author{Kiarash~Aryankia,~\IEEEmembership{Student~Member,~IEEE}
 %       and Rastko~R.~Selmic,~\IEEEmembership{Senior~Member,~IEEE}%
\thanks{This work was supported by the Natural Sciences and Engineering Research Council of Canada (NSERC) Discovery Grant \#RGPIN-2018-05093. (\textit{Corresponding Author: Kiarash Aryankia}). The authors are with the Department of Electrical and Computer Engineering,
    Concordia University, Montreal, QC, Canada
    {\tt\small (email: k\_aryank@encs.concordia.ca; \tt\small rastko.selmic@concordia.ca).}}}

\maketitle
\IEEEpeerreviewmaketitle

\begin{abstract}
This paper considers a leader-following formation control problem for heterogeneous, second-order, uncertain, input-affine, nonlinear multi-agent systems modeled by a directed graph. A tunable, three-layer neural network (NN) is proposed with an input layer, two hidden layers, and an output layer to approximate an unknown nonlinearity.  Unlike commonly used trial and error efforts to select the number of neurons in a conventional NN, in this case an \textit{a priori} knowledge allows one to set up the number of neurons in each layer. The NN weights tuning laws are derived using the Lyapunov theory. The leader-following and formation control problems are addressed by a robust integral of the sign of the error (RISE) feedback and a NN-based control. The RISE feedback term compensates for unknown leader dynamics and the unknown, bounded disturbance in the agent error dynamics. The NN-based term compensates for the unknown nonlinearity in the dynamics of multi-agent systems, and semi-global asymptotic tracking results are rigorously proven using the Lyapunov stability theory. The results of the paper are compared with two previous results to evaluate the efficiency and performance of the proposed method.
\end{abstract}

\begin{IEEEkeywords}
  Formation control, Multi-layer neural networks, Nonlinear multi-agent systems, Second-order systems.
\end{IEEEkeywords}

\section{Introduction}

\IEEEPARstart{M}{utli-agent} systems have received significant attention from researchers and have gained more attraction due to their various applications. One of the significant applications of multi-agent systems is the formation control which has been inspired by nature. In \cite{formation}, different categories of formation control problems were reviewed. The displacement control, one of the main categories of formation control problems, includes leader-following consensus control problem (also known as distributed cooperative tracking control problem) as a particular case that was widely studied in the literature, e.g., \cite{IET,lewis,displacement,das2010distributed}. 
In the formation control problem of multi-agent systems, it is common to consider single- and double-integrator dynamics \cite{zhang2016finite,displacement,ren2007multi} or nonlinear dynamics \cite{lewis,IET,chen2014cooperative,shi2019adaptive} to describe the dynamics of multi-agent systems. Most of the literature that considers an unknown nonlinear dynamics, utilizes NNs or fuzzy logic controllers to approximate the uncertain nonlinearity in the system dynamics. There are two main reasons for this: first, the multi-layer NNs and fuzzy logic controllers are universal approximators that can approximate any continuous and smooth function over a compact set at a desired accuracy \cite{castro1995fuzzy,lewis}; and second, a precise mathematical model for the dynamics of a multi-agent system is not required, which eliminates an extra effort to acquire an accurate system dynamics \cite{park2005direct}. %For these reasons, approximation of the uncertain nonlinearity in the multi-agent system dynamics can be seen in a wide range of literature.

 The literature on formation control mainly uses one-layer NNs, e.g., radial basis function neural networks (RBFNNs) \cite{tan2020neural,IET} or two-layer NNs \cite{chen2014cooperative}, to compensate for the unknown nonlinearity (usually caused by uncertainty) of system dynamics. These studies commonly use the universal approximation property where the RBFNNs and the Gaussian activation functions are utilized. In RBFNN, hidden layer neurons are usually uniformly distributed on a regular lattice or they are randomly initialized. In \cite{wang2006learning}, authors constructed an RBFNN where a minimum distance of any two centers of the bell-shaped activation functions is the same as the width of the activation function.
 An RBFNN can approximate a nonlinearity if the input variables of the activation function are located in certain neighborhoods of the RBF networks centers \cite{wang2006learning}. 
 % It has been shown that the size of the neighborhoods is less than one half of the minimal distance of any two neuron centers \cite{wang2006learning}. 
 Thus, one can select a large number of neurons (not a priori known) and distribute activation functions over a regular lattice uniformly to cover the compact set. A one-layer NN with a tunable, hidden layer, in the form of linear-in-the-parameter can approximate a nonlinearity in the dynamics of a nonlinear system at a desired accuracy. Still, the exact number of neurons for control systems design is not clearly established. Although the self-organizing receptive field method was introduced in \cite{moody1988learning} to optimize the receptive fields locations in RBFNN, these results have not been used in the formation control problem to approximate the unknown nonlinearity. 

These limitations and restrictions motivated us to propose the three-layer NN with two hidden layers, where the numbers of hidden neurons are clearly given (see Remark \ref{remark_6}). Here we use the sigmoid activation function.

To address the number of neurons in a NN, a self-structuring NN was proposed in \cite{park2005direct}, whereby a neuron divides into two neurons by satisfying a specific condition, but the output of these two neurons remains identical to the output of a neuron without division. This work was extended to a nonlinear multi-agent system \cite{chen2014cooperative}, where authors designed an observer to estimate states of leader for each agent and addressed cooperative tracking control problem of a leader. Compared with existing results, we use two hidden layers instead of one \cite{park2005direct,chen2014cooperative,patre2008asymptotic,shin2018neural,yang2015robust}, and we developed tuning laws for all three layers. The other novelty is that the number of neurons is fixed. 
% and resolve the ambiguity or dependency of neurons number to a studied application or a condition.

To avoid the chattering effect of the sliding mode controllers, the RISE feedback controller has gained significant attention. Although a RISE feedback control can potentially exceed the actuators limit due to the large integral gain, it has been shown that it can be implementable in real-time applications, e.g., \cite{yao2018adaptive}. The RISE feedback has been used in \cite{xian2004continuous}, to compensate for a nonlinearity in a class of higher-order, multi-input multi-output, nonlinear systems. These results were extended in \cite{patre2008asymptotic}, where authors used the NN-based controller to address the tracking problem of an uncertain system. In \cite{patre2008asymptotic}, the authors used the discontinuous projection algorithm, which requires an a priori knowledge of the convex set and maximum and minimum of NN's weights \cite{yao1997adaptive}. In a similar study, \cite{dierks2008neural}, authors used two-layer NN (input, hidden, and output layers) where only the weights between two outer layers can be tuned. The tuning laws in \cite{dierks2008neural} are derived from Lyapunov stability theory. These results further extended in \cite{shin2018neural,yao2018adaptive}, where the authors used a two-layer NN, and a discontinuous projection algorithm has tuned the NN weights matrices between any two consecutive layers. 

In comparison to \cite{xian2004continuous,patre2008asymptotic,yao1997adaptive,dierks2008neural,shin2018neural,yao2018adaptive}, the novelty of our work is that we utilize a three-layer NN comprised of an input layer, two hidden layers, and an output layer. The NN tuning laws are derived using the Lyapunov theory. Moreover, all NN weights matrices are tunable, and the tuning law for each of them has been established.

Several of the NN-based formation control results that address the displacement-based, leader-following problem established only uniform ultimate bounded stability, due to NN approximation error and lack of communication from leader to all agents, e.g., \cite{zhang2012adaptive,chen2014cooperative,chen,chendelay,lewis,das2010distributed}. We propose to use a RISE feedback controller with two hidden layers NN which achieves semi-global, asymptotic stability for the leader-following formation control for heterogeneous, second-order, uncertain, input-affine, nonlinear multi-agent systems.

The main contributions of the paper are as follows:
\begin{enumerate}
    \item We consider the leader to be connected to at least one of the other agents. Unlike \cite{yang2015robust,patre2008asymptotic,fischer2013saturated,dierks2008neural,shin2018neural,yao2018adaptive}, where the desired trajectory signals (information from the leader) are fed into the first layer of the NN, the novelty here is to construct the NN without these signals. Thus, effects of the leader dynamics and unknown disturbances in the tracking error of each agent are ameliorated by the RISE feedback controller.
    
    \item We determine the number of neurons in each layer and derived the tuning laws for the NN weights matrices. Although the literature considered one- or two-layer NN \cite{IET,das2010distributed,yang2015robust,ge2004adaptive,chendelay,shin2018neural,yao2018adaptive}, we developed our results for a three-layer NN. 
    
    \item The semi-global, asymptotic, leader-following performance is achieved, which distinguishes our work from \cite{lewis,chen,IET,das2010distributed}.

\end{enumerate}

This paper is organized as follows. In Section II, the basic graph theory, problem statement, and NN design with two hidden layers are given. In Section III, the control design of RISE feedback and NN adaptive tuning laws for a second-order, multi-agent system with disturbance and unknown nonlinearity in the system dynamics are given, where the stability of the proposed method is proven using the Lyapunov stability theory. In Section IV, the numerical simulation results are presented to substantiate the theoretical results. Then, results are compared with two different methods to illustrate the performance of the proposed method. Finally, the conclusion is given in Section V.

\section{Problem Formulation and Preliminaries}

\subsection{Notations}
The notations that are used in the paper are given here. Let ${\mathbb{R}^n}$ indicate $n$-dimensional Euclidean space and $\mathbb{R}^{n\times m}$ denote the set of $n\times m$ real matrices. Let ${I}_n$ denote $n\times n$ identity matrix and ${\mathbf{1}}_N=[1,...,1]^T$ is an all-one vector of size $N$. With $diag(a_1,a_2,...)$, we denote a block diagonal matrix with diagonal elements of $a_i$. The notation vec(.) represents a stacked vector of a matrix column-wise. For a square matrix $A$, $tr(A)$ is the sum of diagonal elements of matrix $A$. The Kronecker product is represented by $\otimes$. With $A^T$, we indicate the transpose of matrix A. For a matrix $A$, we denote the minimum and maximum singular values by $\underline{\sigma}(A)$ and $\bar{\sigma}(A)$, respectively. By $||.||$ we denote the 2-norm  of a vector or matrix and the Frobenius norm for an arbitrary matrix $A$ is defined as $||A||_F=  \sqrt{tr(A^TA)}$. For any vector $x \in \mathbb{R}^n$, the Manhattan norm is $||x||_1 =\sum_{i=1}^n |x_i|$.  With a superscript $'$ we denote the partial derivative of a function with respect to the given variable. The signum function is denoted by sgn(.). The notation $\underset{\mu (H)=0}{\cap}$ denotes the intersection over all sets $H$ of Lebesgue measure zero, and $\overline{co}$ denotes the convex closure.

\subsection{Graph Theory}
We model interactions among agents by a directed graph $G=(V,E)$, where $V=\{v_1, ... , v_N\}$ is a non-empty set of $N$ vertices and $ E\subset V \times V $ is a set of edges. We assume that the graph is simple, i.e., no repeated edges and no self-loops.
Adjacency, or connectivity matrix is $A=[a_{ij}]$ with $a_{ij}>0$, if $(v_j,v_i) \in E$, and $a_{ij}=0$, otherwise. Note that we consider $a_{ii}=0$. 
\begin{definition}
        The neighborhood $N_i$ of the vertex $v_i$ is a set of all the edges adjacent to $v_i$
        \begin{equation}\label{eq15_3}
            \begin{array}{lr}
                N_i=\{V_j \in V |\ (v_j, v_i) \in E\}.
            \end{array}
        \end{equation}
    \end{definition}

Let us define the in-degree matrix $D=diag(deg_i)$, with ${deg}_i=\sum_{j\in N_i} a_{ij} $ as the sum of incoming edges weights to the vertex $i$. The graph Laplacian is $L=D-A$, where a sum of all rows is equal to zero. 

The leader adjacency matrix is defined as $B=diag(b_1,...b_N)$, with $b_i>0$, if the leader sends its information to the $i$-th agent, and $b_i=0$, otherwise. It is assumed that the leader is connected to at least one agent.  \begin{lemma}[\hspace{-.1mm}\cite{wah2007synchronization}]
     The directed graph $G$ is strongly connected if and only if its Laplacian matrix is irreducible.
\end{lemma}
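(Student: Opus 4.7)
The plan is to exploit the classical correspondence between matrix irreducibility and strong connectivity of an associated directed graph, and then observe that this associated graph coincides with $G$ because the off-diagonal sparsity pattern of $L$ is exactly the edge pattern of $A$.

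First, I would recall the standard algebraic characterization of irreducibility: a matrix $M \in \mathbb{R}^{N\times N}$ is reducible if and only if there exists a permutation matrix $P$ such that $P^T M P$ has block upper-triangular form with two square diagonal blocks and a zero lower-left block. Equivalently, $M$ is irreducible if and only if the directed graph $\Gamma(M)$ built on $N$ vertices by drawing an edge from $j$ to $i$ whenever $M_{ij}\neq 0$ (with $i\neq j$) is strongly connected. This is a textbook fact and can be taken as a starting point.

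Second, I would unpack the off-diagonal structure of $L=D-A$. Because $D$ is diagonal, one has $L_{ij}=-a_{ij}$ for every $i\neq j$. By the definition of the adjacency matrix recalled just above the lemma, $a_{ij}>0$ exactly when $(v_j,v_i)\in E$, and $a_{ij}=0$ otherwise. Therefore the off-diagonal nonzero pattern of $L$ coincides with that of $A$, and the directed graph $\Gamma(L)$ associated with $L$ in the irreducibility characterization equals $G$ (the edges being read off in the same orientation $v_j\to v_i$).

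Combining the two observations produces the desired equivalence in one line: $L$ is irreducible $\iff$ $\Gamma(L)$ is strongly connected $\iff$ $G$ is strongly connected, proving both directions simultaneously.

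I do not anticipate a genuinely hard step here; the only place requiring care is bookkeeping of the edge-direction convention, since the adjacency matrix uses $a_{ij}>0$ for the edge $(v_j,v_i)$ rather than $(v_i,v_j)$. I would verify that the orientation used in the standard irreducibility–graph correspondence matches the one used in the definition of $G$; if a reference uses the opposite convention, the argument still goes through because a digraph is strongly connected if and only if its reverse is, so no substantive change occurs.
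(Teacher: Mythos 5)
Your argument is correct and is exactly the standard one: the paper itself gives no proof of this lemma (it is quoted with a citation), and the classical equivalence between irreducibility of a matrix and strong connectivity of its associated digraph, combined with the observation that $L_{ij}=-a_{ij}$ for $i\neq j$ so that $\Gamma(L)$ and $G$ share the same edge set, is precisely the content of the cited result. Your remark that the orientation convention is immaterial (a digraph is strongly connected iff its reverse is) correctly disposes of the only delicate point.
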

\begin{definition}[\hspace{-.1mm}\cite{qu2009cooperative,lewis}]\label{def_m-matrix}
      A square matrix $P \in \mathbb{R}^{N\times N}$ is said to be a non-singular M-matrix, if it is positive definite, and the off-diagonal elements have non-positive values.
\end{definition}

\begin{lemma}[\hspace{-.1mm}\cite{qu2009cooperative}]\label{lem_l+b}
Let the directed graph $G=(V,E)$ be strongly connected. If $\exists  b_i\ne 0$, then $L+B$ is a non-singular M-matrix and the following inequality holds
\begin{equation}\label{eq_2}
    \Pi \triangleq P(L+B) + (L+B)^T P >0,
\end{equation}
where $P=diag(p_i) \triangleq diag(1/q_i)$ is a positive definite matrix and $\vec{q} \triangleq (L+B)^{-1} {\mathbf{1}}_N$ with $\vec{q}= [q_1,...,q_N]^T$.
\end{lemma}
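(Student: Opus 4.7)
The plan is to split the proof into (i) showing $M := L+B$ is a non-singular M-matrix, so that $\vec q := M^{-1}\mathbf{1}_N$ is well-defined with positive entries and $P = \text{diag}(1/q_i)$ is positive definite, and (ii) establishing the Lyapunov-type inequality $\Pi > 0$ for this specific $P$.

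For (i), $M$ is a Z-matrix since its off-diagonal entries are $-a_{ij} \le 0$. To rule out a zero eigenvalue, I would use a maximum-principle argument: if $M\vec v = 0$ for some $\vec v \ne 0$, take $i^\star$ with $v_{i^\star} = \max_i v_i > 0$ (WLOG, flipping sign of $\vec v$ if needed); the $i^\star$-th equation reads $(\text{deg}_{i^\star} + b_{i^\star})v_{i^\star} = \sum_{j \in N_{i^\star}} a_{i^\star j} v_j \le \text{deg}_{i^\star} v_{i^\star}$, forcing $b_{i^\star} = 0$ together with $v_j = v_{i^\star}$ for every $j \in N_{i^\star}$. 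Iterating the same argument at each such $j$, strong connectivity of $G$ propagates the maximum throughout, so $\vec v = c\mathbf{1}_N$; plugging back into any row with $b_i \ne 0$ forces $c = 0$, a contradiction. Combined with a Ger\v{s}gorin estimate (eigenvalues lie in disks centred at $\text{deg}_i + b_i$ of radius $\text{deg}_i$, so they sit in $\{\Re(z)\ge 0\}$), this shows every eigenvalue of $M$ has strictly positive real part, so $M$ is a non-singular M-matrix. Then $M^{-1} \ge 0$ with no identically-zero row (else $M^{-1}$ would be singular), so $\vec q = M^{-1}\mathbf{1}_N > 0$ entry-wise.

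For (ii), $\Pi$ is symmetric with positive diagonal $2 p_i(\text{deg}_i+b_i)$ and non-positive off-diagonals $-(p_i a_{ij}+p_j a_{ji})$. The crucial identity is $M\vec q = \mathbf{1}_N$, which I would exploit through the congruence $Q\Pi Q = MQ + QM^T$ (with $Q = P^{-1} = \text{diag}(q_i)$): since $MQ\mathbf{1}_N = M\vec q = \mathbf{1}_N$, the matrix $MQ$ has positive diagonal, non-positive off-diagonals, and row sums uniformly equal to $1$, hence is strictly row-diagonally dominant and, via strong connectivity of $G$, irreducible. The classical M-matrix Lyapunov theorem then yields $MQ + QM^T > 0$, and the congruence transfers this back to $\Pi > 0$. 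This step is the main obstacle: a bare-hands quadratic-form expansion of $\vec x^T \Pi \vec x$ produces sign-indefinite cross terms of the form $(p_i a_{ij} - p_j a_{ji})\, x_i x_j$ whenever the graph is not weight-balanced, so no direct completion-of-squares succeeds. I would therefore lean on the abstract M-matrix Lyapunov framework (as in Qu's monograph) rather than attempting a purely algebraic derivation, with $P = \text{diag}(1/q_i)$ serving as the explicit diagonal Lyapunov factor that makes the abstract existence result concrete.
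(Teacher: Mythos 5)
The paper itself offers no proof of this lemma --- it is imported verbatim from the cited monograph --- so there is nothing internal to compare against; I will assess your argument on its own terms. Part (i) is correct and complete: the Z-matrix observation, the maximum-principle argument at a maximizing index, the propagation of the maximum along in-neighbours using strong connectivity, the Ger\v{s}gorin estimate ruling out nonzero eigenvalues on the imaginary axis, and the conclusion $\vec q=(L+B)^{-1}\mathbf{1}_N>0$ are all sound.

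Part (ii) has a genuine gap at the decisive step. The congruence $Q\Pi Q=MQ+QM^{T}$ with $M=L+B$, $Q=\mathrm{diag}(q_i)$ is fine, and $MQ$ is indeed a Z-matrix with every row sum equal to $1$, hence strictly row-diagonally dominant; but row diagonal dominance of a nonsymmetric matrix $N$ does not imply $N+N^{T}>0$ (take $N=\left(\begin{smallmatrix}1&0\\-10&11\end{smallmatrix}\right)$: row sums are $1$, yet $\det(N+N^{T})=-56$). The diagonal-stability theorem for M-matrices guarantees only that \emph{some} positive diagonal $D$ satisfies $D(MQ)+(MQ)^{T}D>0$; your congruence needs $D=I$, which the theorem does not provide. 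Moreover the gap cannot be closed, because under this paper's conventions the specific choice $P=\mathrm{diag}(1/q_i)$ does not in general make $\Pi$ positive definite: with $N=2$, $a_{12}=\epsilon$, $a_{21}=10$, $b_1=1$, $b_2=0$, so $L+B=\left(\begin{smallmatrix}1+\epsilon&-\epsilon\\-10&10\end{smallmatrix}\right)$, one gets $\vec q\to(1,\,1.1)^{T}$ and $\Pi\to\left(\begin{smallmatrix}2&-100/11\\-100/11&200/11\end{smallmatrix}\right)$ as $\epsilon\to0$, whose determinant is $(4400-10000)/121<0$. The repair is the two-sided construction you gesture at but do not carry out: set $w=(L+B)^{-T}\mathbf{1}_N>0$ and $P=\mathrm{diag}(w_i/q_i)$; then $(PM+M^{T}P)\vec q=P\mathbf{1}_N+M^{T}w=P\mathbf{1}_N+\mathbf{1}_N>0$, and a symmetric Z-matrix that maps a positive vector to a positive vector is positive definite (congruence by $\mathrm{diag}(q_i)$ makes it strictly diagonally dominant). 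The version with $P=\mathrm{diag}(1/q_i)$ survives only under extra structure --- e.g., weight-balanced or undirected graphs, where $(L+B)^{T}\mathbf{1}_N\ge0$ can play the role of $w$ --- so your instinct that "no direct completion-of-squares succeeds" was pointing at a real obstruction, not merely a technical inconvenience.
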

% https://math.stackexchange.com/questions/211453/inverse-of-a-positive-definite/211460

\subsection{Problem Formulation}
The multi-agent system is considered to consist of $N$ agents where the dynamics of each agent is given by
\begin{equation}\label{eq1}
\begin{array}{lr} 
\dot{p}_{i}=v_{i}, \\
\dot{v}_{i}=f_{i}(p_{i},v_i)+g_i(p_{i},v_i)u_{i}(t) +w_{i}(t),   \ \ i=1,\ ... \ ,N,
\end{array}
\end{equation}
\noindent where the variables $p_{i}\in\mathbb{R}^n$ and $v_{i}\in \mathbb{R}^n $ denote the position and velocity states of each agent, respectively. The functions  $f_i(p_i,v_i)\in \mathbb{R}^{n}$ and $g_i(p_i,v_i) \in \mathbb{R}^{n\times n}$ are unknown nonlinear $\mathcal{C}^2$ functions. The control input is $u_i\in \mathbb{R}^n$ and a disturbance of an each agent is $w_i \in \mathbb{R}^n$. 

One can rewrite the dynamics of the multi-agent system in a stacked form as follows:
    \begin{equation}\label{eq2_1}
    \begin{split}
    \dot{\vec{x}}_1&=\vec{x}_{2},  \\
    \dot{\vec{x}}_2&=f(\vec{x_1},\vec{x_2})+g(\vec{x_1},\vec{x_2})u +w,
    \end{split}
    \end{equation}
where the stacked vector $\vec{x}_1=[{p^T_1}, {p^T_2}, ...,  {p^T_N}]^T\in \mathbb{R}^{nN}$ and the stacked vector $\vec{x}_2=[{v^T_1}, {v^T_2}, ..., {v^T_N}]^T \in \mathbb{R}^{nN}$. Consider the overall states of a multi-agent system as a vector $\vec{x}=[\vec{x}_1^T,\vec{x}_2^T]^T$. The stacked vector $f(\vec{x})=[f^T_1(p_1,v_1),f^T_2(p_2,v_2), ...,f^T_N(p_N,v_N)]^T \in \mathbb{R}^{nN}$ consists of nonlinearities of all agents, the overall control input gain matrix is $g(\vec{x})=diag(g_1(p_1,v_1),g_2(p_2,v_2), ...,g_N(p_N,v_N)) \in \mathbb{R}^{nN\times nN}$, the stacked control input vector is $u=[u_1^T, u_2^T,..., u_N^T ]^T \in \mathbb{R}^{nN}$, and the stacked disturbance vector is $w=[w_1^T,w_2^T,..., w_N^T]^T \in \mathbb{R}^{nN}$.

The leader’s dynamics is given by
\begin{equation}\label{eq3_1}
\begin{split}
&\dot{p}_l=v_l,\\
&\dot{v}_l=f_l(p_l,v_l),
\end{split}
\end{equation}
where $p_{l}, v_{l}\in \mathbb{R}^n $ denote the position and velocity states of the leader. Here we use some standard assumptions \cite{ge2004adaptive,ge2003adaptive,fischer2013saturated,shin2018neural,patre2008asymptotic,wang2008adaptive,wen2017neural}
\begin{assumption}\label{ass1}
 The trajectory of the leader and its first three time-derivatives are bounded, i.e. $p_l,v_l,f_l,\dot{f}_l \in  \mathcal{L}_\infty $. The bounds of $p_l,v_l,f_l,\dot{f}_l$ are considered to be unknown.
\end{assumption}

\begin{assumption}\label{ass2}
The gain matrix $g_i(p_i,v_i)$ is a symmetric, positive definite matrix and its inverse satisfies the following inequality
\begin{equation}\label{eq_ass2}
\underline{g} ||x||^2 \le x^T g_i^{-1} x \le \bar{g} ||x||^2,
\end{equation}
where $x\in\mathbb{R}^n$ and constants $\underline{g}$ and  $\bar{g}$ are unknown positive constants.
\end{assumption}

\begin{remark}\label{rem1}
Assumption \ref{ass2} states a condition for the controllability of the system (\ref{eq1}). These bounds are considered for analytical purposes, and their exact values are considered to be unknown for control design. We removed an a priori knowledge about $\underline{g}$ in our control design. Its exact value is not required and this consideration distinguishes our work from \cite{IET,chendelay,ge2004adaptive,IEEE_TAES}. This assumption is not hard to satisfy as the large class of nonlinear systems, described by Euler–Lagrange formulation, meets this assumption \cite{Lewisbook,patre2008asymptotic}.

\end{remark}

\begin{assumption}\label{ass3}
The disturbance and the first two time-derivatives (i.e. $w_i,\dot{w}_w,\ddot{w}_i$) are considered to be bounded with unknown bounds.
\end{assumption}

Let us define the error vector for $i$-th agent as \cite{shi2019adaptive}
\begin{equation}\label{eq5_1}
\begin{split}
&e_{i}=\sum_{j \in N_i} a_{ij} (p_i-p_j-d_i+d_j)+ b_i(p_i-p_l-d_i), 
\end{split}
\end{equation}
where the constant vectors $d_{i} \in \mathbb{R}^n, i  \in \{1,..., N\} $ represents the desired relative position between agent $i$ and the leader. Note that we define $d_{ij} = d_i-d_j$, representing the desired relative position between agents $i$ and $j$.  For the agent $i$, define $\delta_i$ as $\delta_i \triangleq \dot{e}_i$
\begin{equation}\label{eq6_1}
\begin{split}
\delta_i =\sum_{j \in N_i} a_{ij} (v_i-v_j)+ b_i(v_i-v_l). 
\end{split}
\end{equation}

The control objective is to design a robust, distributed controller for each agent using local information that ensures all agents achieve the desired formation and maintain the leader's velocity: 
\begin{equation}\label{eq6}
\begin{split}
   & \text{lim}_{t \rightarrow \infty} ||p_i-p_l-d_i|| = 0, \\
&    \text{lim}_{t \rightarrow \infty} ||v_i-v_l|| = 0.
\end{split}
\end{equation}

  \begin{figure}[t]
    \centering
    \includegraphics[scale=.8]{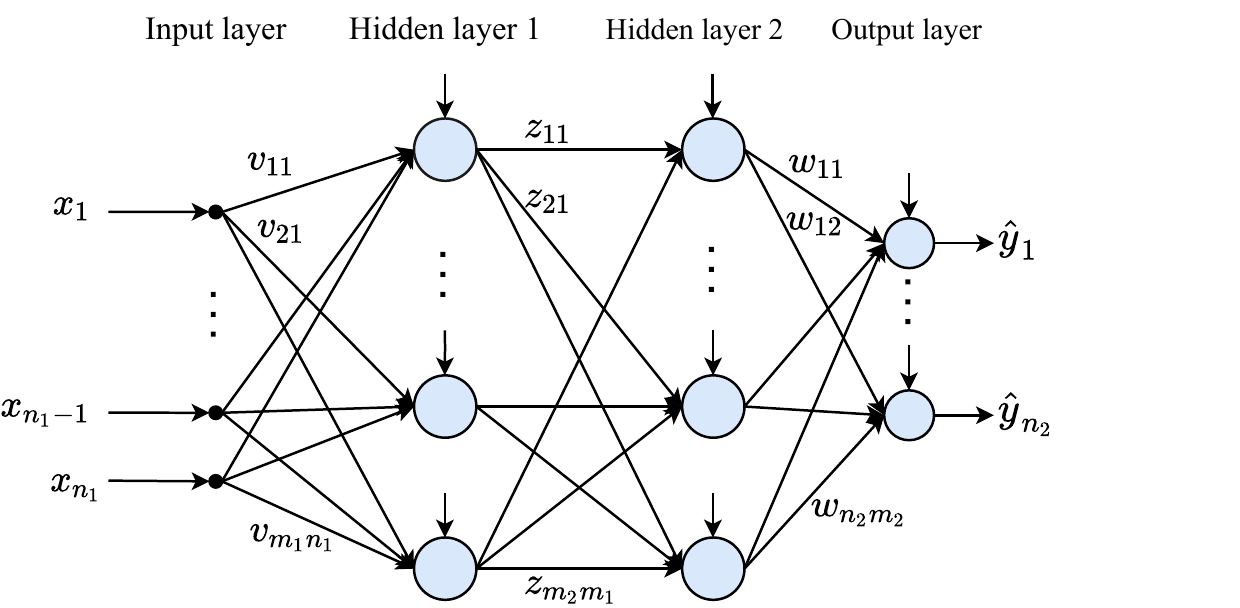} 
  \caption{Three-layer neural network architecture.}
   \label{fig1}
\end{figure}  

      \subsection{Neural Networks}\label{sub_d}

Based on the universal approximation theorem, for a smooth nonlinear function of $y(\mathcal{X})$ over a compact set of $\Omega_{\mathcal{X}}$ there exists a three-layer NN \cite{hornik1989multilayer}, such that 
 \begin{equation}\label{eq5}
     {y}( \mathcal{X}) = W^T \sigma_1 ( Z^T \sigma_2 (V^T \mathcal{X})) +\epsilon,
 \end{equation}
where $V$, $Z$, and $W$ are ideal NN weights matrices, $\sigma_1$ and $\sigma_2$ are the activation functions, and $\epsilon$ is the approximation error.
Here we establish some standard assumptions to estimate the nonlinearity \cite{lewis,Lewisbook,shin2018neural}.

\begin{assumption}\label{ass4}
 The unknown ideal neural network weights matrices $V$, $Z$ and $W$ are bounded with a fixed bound such that
 \begin{equation}
 \begin{split}
     ||V||_F & \le V_m, \\
     ||Z||_F & \le Z_m, \\
     ||W||_F & \le W_m.
 \end{split}
 \end{equation}
\end{assumption}

\begin{assumption}\label{ass5}
 For the vector $\mathcal{X} \in \mathcal{L}_{\infty}$, the approximation error $\epsilon$, and its first and second time-derivatives are bounded with a fixed bound.
\end{assumption}

In this paper, we use a three-layer NN, as shown in Fig. \ref{fig1}, to approximate an unknown nonlinear function over a compact set $\Omega_{\mathcal{X}}$.
 The first hidden layer has $m_1$ neurons and the second hidden layer has $m_2$ neurons. Considering $\hat{y} = [\hat{y}_{1}, ..., \hat{y}_{n_2}]^T \in \mathbb {R}^{n_2}$, the output of neural network (Fig. \ref{fig1}) can be written as 
 \begin{equation}\label{eq8}
     \hat{y} = \hat{W}^T \sigma_1 ( \hat{Z}^T \sigma_2 (\hat{V}^T \mathcal{X})),
 \end{equation}
where $\mathcal{X}= [1, \bar{\mathcal{X}}^T]^T  \in \mathbb{R}^{n_1+1}$ with $ \bar{\mathcal{X}}^T = [x_1,...,x_{n_1}]^T \in \mathbb{R}^{n_1}$, the neural network weights matrices $\hat{V} \in \mathbb{R}^{(n_1+1) \times m_1}$, $\hat{Z} \in \mathbb{R}^{(m_1+1) \times m_2}$, and $\hat{W} \in \mathbb{R}^{(m_2+1) \times n_2} $. 
%Note that  $\hat{V}^T \mathcal{X}= [\mathbf{v}_{1} , \bar{V}^T][1, \bar{\mathcal{X}}]^T $, and $\bar{V}^T = [\mathbf{v}_{2},...,\mathbf{v}_{i_{n}+1}] \in \mathbb{R}^{m_1\times n} $ with $\mathbf{v}_{k}=[v_{1k},...,v_{m_1k}]^T$. With the same approach one can write each matrix element-wise. 
We use $\sigma_1(.) \in \mathbb{R}^{m_2+1} $ and $\sigma_2(.) \in \mathbb{R}^{m_1+1}$ to denote the activation functions of hidden layers.
We consider the sigmoid function for both hidden layers, i.e., for a scalar $s_i$, one has $\sigma_k(s_i) = \frac{1}{1+e^{-s_i}}$ for $k\in\{1,2\}$. The activation function of the output layer is considered to be linear. The ideal neural network weights matrices $V$, $Z$, and $W$ are defined as follows:
\begin{equation}\label{eq13___1}
\small{
\begin{array}{lr}
V,Z,W= \arg \min_{\hat{V},\hat{Z},\hat{W} } \big{\{}\sup_{ \mathcal{X} \in \Omega_{\mathcal{X}}} ||y(\mathcal{X}) - \hat{y}(\mathcal{X}) ||\big{\}}.
\end{array}
}
\end{equation}

Let us define $\tilde{y}= y- \hat{y}$, $\tilde{\sigma}_k= \sigma_k- \hat{\sigma}_k$, $\tilde{V}= V- \hat{V}$, $\tilde{Z}= Z- \hat{Z}$, and $\tilde{W}= W- \hat{W}$. Let us define ${\sigma}_1 \triangleq \sigma_1({Z}^T  {\sigma}_2) $, with ${\sigma}_2 \triangleq {\sigma}_2({V}^T{\mathcal{X}})$ and $\hat{\sigma}_1 \triangleq \sigma_1(\hat{Z}^T  \hat{\sigma}_2) $  with $\hat{\sigma}_2 \triangleq {\sigma}_2(\hat{V}^T{\mathcal{X}}) $. Moreover, we express $\hat{\sigma}_1' \triangleq  \frac{d\sigma_1(s)}{ds}|_{s= \hat{Z}^T \hat{\sigma}_2 }$ and $\hat{\sigma}_2' \triangleq  \frac{d\sigma_2(s)}{ds}|_{s= \hat{V}^T \mathcal{X} }$.

\begin{lemma}\label{lem2}
The estimation error satisfies the following
\begin{equation}
\begin{split}
y-\hat{y} =& \tilde{W} ^T \big{(} \hat{\sigma}_1 - \hat{\sigma}_1' \hat{Z}^T {\hat{\sigma}_2}- \hat{\sigma}_1' \hat{Z}^T  \hat{\sigma}_2' \hat{V}^T{\mathcal{X}} \big{)}  \\  % line 1
& + \hat{W} ^T  \hat{\sigma}_1' \tilde{Z} ^T \big{(} \hat{\sigma}_2 -  \hat{\sigma}_2' \hat{V}^T \mathcal{X} \big{)}  \\ % line 2
 & +\hat{W} ^T \big{(}  \hat{\sigma}_1' \hat{Z} ^T \hat{\sigma}_2' \tilde{V}^T \mathcal{X} \big{)}  + \bar{\epsilon},
\end{split}
\end{equation}
with  
\begin{equation}\label{eq12_1}
\small{
\begin{aligned}
\bar{\epsilon}=    &\hat{W}^T \hat{\sigma}_1'  Z^T  \textit{O}_2(.)+\hat{W}^T \hat{\sigma}_1'  \tilde{Z}^T \hat{\sigma}_2' V^T \mathcal{X} \\ %1st line
& + [\tilde{W}^T (\hat{\sigma}_1' {Z}^T\sigma_2 ) + \tilde{W} \hat{\sigma}_i'\hat{Z}^T \hat{\sigma}_2'\hat{V}^T \mathcal{X} + W^T\textit{O}_1(.)  +\epsilon ],
\end{aligned}}
\end{equation}
where $\textit{O}_1(.)$ and $\textit{O}_2(.)$  are higher-order terms of Taylor series expansion of $ {\sigma}_1(Z^T \sigma_2 (V^T{\mathcal{X}})) $ at $\hat{Z}^T \sigma_2(\hat{V}^T{\mathcal{X}})$ and $ {\sigma}_2({V}^T{\mathcal{X}}) $ at $\hat{V}^T{\mathcal{X}}$, respectively.
\end{lemma}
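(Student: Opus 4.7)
The plan is to generalize the standard two-layer Lewis-type expansion (in which $W^T\sigma(V^T\mathcal{X})-\hat W^T\sigma(\hat V^T\mathcal{X})$ is rewritten as a sum of terms linear in the weight errors plus higher-order remainders) by applying the same linearization idea twice, once for the outer activation $\sigma_1$ and once for the inner activation $\sigma_2$. First, I would add and subtract $\hat W^T\sigma_1$ to split the error as
\begin{equation*}
y-\hat y = W^T\sigma_1 - \hat W^T\hat\sigma_1 + \epsilon
= \tilde W^T\sigma_1 + \hat W^T(\sigma_1-\hat\sigma_1) + \epsilon,
\end{equation*}
since $W=\hat W+\tilde W$. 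The first summand will feed the $\tilde W^T(\cdot)$ block in the claimed identity and part of $\bar\epsilon$; the second summand will, after linearization, feed the $\tilde Z^T$ and $\tilde V^T$ blocks.

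Next I would Taylor-expand $\sigma_1(Z^T\sigma_2(V^T\mathcal{X}))$ about the point $\hat Z^T\sigma_2(\hat V^T\mathcal{X})=\hat Z^T\hat\sigma_2$, giving $\sigma_1=\hat\sigma_1+\hat\sigma_1'\bigl(Z^T\sigma_2-\hat Z^T\hat\sigma_2\bigr)+O_1$, and similarly expand $\sigma_2(V^T\mathcal{X})$ about $\hat V^T\mathcal{X}$ to obtain $\sigma_2=\hat\sigma_2+\hat\sigma_2'\tilde V^T\mathcal{X}+O_2$. Substituting the second into the first and using $Z=\hat Z+\tilde Z$ in the bracketed difference yields
\begin{equation*}
\sigma_1-\hat\sigma_1 = \hat\sigma_1'\bigl(\tilde Z^T\hat\sigma_2 + \hat Z^T\hat\sigma_2'\tilde V^T\mathcal{X} + \tilde Z^T\hat\sigma_2'\tilde V^T\mathcal{X} + Z^T O_2\bigr) + O_1,
\end{equation*}
so that $\hat W^T(\sigma_1-\hat\sigma_1)$ produces exactly the two ``clean'' terms $\hat W^T\hat\sigma_1'\tilde Z^T\hat\sigma_2$ and $\hat W^T\hat\sigma_1'\hat Z^T\hat\sigma_2'\tilde V^T\mathcal{X}$ required by the lemma, together with bilinear residuals that will be absorbed in $\bar\epsilon$.

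To match the first line of the claimed identity, I then need to account for the extra pieces $-\tilde W^T\hat\sigma_1'\hat Z^T\hat\sigma_2$ and $-\tilde W^T\hat\sigma_1'\hat Z^T\hat\sigma_2'\hat V^T\mathcal{X}$ inside the $\tilde W^T$ parenthesis. These I would generate by expressing $\tilde W^T\sigma_1$ as $\tilde W^T\hat\sigma_1$ plus the $\tilde W^T\hat\sigma_1'Z^T\sigma_2$ term (which is exactly the piece listed in $\bar\epsilon$) minus the same quantity, and then splitting $Z^T=\hat Z^T+\tilde Z^T$ and $\sigma_2=\hat\sigma_2+\hat\sigma_2'\hat V^T\mathcal{X}+\hat\sigma_2'\tilde V^T\mathcal{X}+\ldots$ inside one copy to pull the two ``clean'' pieces back into the $\tilde W^T(\cdot)$ block. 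Symmetrically, the difference $\hat\sigma_2-\hat\sigma_2'\hat V^T\mathcal{X}$ in the $\tilde Z^T$ block is obtained by writing $\hat\sigma_2=\bigl(\hat\sigma_2-\hat\sigma_2'\hat V^T\mathcal{X}\bigr)+\hat\sigma_2'\hat V^T\mathcal{X}$ and shifting the second piece into $\bar\epsilon$ as $\hat W^T\hat\sigma_1'\tilde Z^T\hat\sigma_2'V^T\mathcal{X}$ (after recombining $\hat V^T\mathcal{X}$ with $\tilde V^T\mathcal{X}$ to form $V^T\mathcal{X}$). Finally, collecting every remaining term, namely $W^TO_1$, $\hat W^T\hat\sigma_1'Z^TO_2$, the two $\tilde W^T$-bookkeeping pieces, the $\hat W^T\hat\sigma_1'\tilde Z^T\hat\sigma_2'V^T\mathcal{X}$ piece, and the original $\epsilon$, into $\bar\epsilon$ gives precisely the expression (\ref{eq12_1}).

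The main obstacle is purely combinatorial: there is no conceptual difficulty beyond a careful application of two Taylor expansions, but a great deal of discipline is needed when splitting $W$, $Z$, $V$ into hat-plus-tilde inside nested products to ensure every leftover quadratic-in-error or higher-order term lands inside $\bar\epsilon$ and none of the three canonical terms acquires spurious factors. A useful sanity check at the end is to verify that $\bar\epsilon$ contains only terms that are either higher-order in the Taylor remainders or at least quadratic in the weight errors $\tilde V,\tilde Z,\tilde W$, which is exactly what is needed for the Lyapunov analysis that comes later.
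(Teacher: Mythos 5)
Your proposal is correct and follows essentially the same route as the paper: isolate $\tilde W^T\sigma_1$ (equivalently the paper's split into $\tilde W^T\hat\sigma_1+\tilde W^T\tilde\sigma_1+\hat W^T\tilde\sigma_1$), apply the two nested Taylor expansions of $\sigma_1$ and $\sigma_2$, and shuffle the leftover cross terms into $\bar\epsilon$. One minor caveat: your closing sanity check is not quite accurate---several terms in $\bar\epsilon$ (e.g., $\tilde W^T\hat\sigma_1'Z^T\sigma_2$ and $\hat W^T\hat\sigma_1'\tilde Z^T\hat\sigma_2'V^T\mathcal{X}$) are only \emph{linear} in the weight errors, multiplied by the bounded ideal weights; what actually matters, and what Lemma \ref{lemma3} exploits via Assumption \ref{ass4}, is that every residual term is bounded by products of the estimated-weight norms and $\|\mathcal{X}\|$.
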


\begin{proof}
Following the procedure in \cite{Lewisbook,park2005direct}, from (\ref{eq5}) and (\ref{eq8}), one can write 
\begin{equation}\label{eq11}
    \begin{split}
    y-\hat{y} &= {W}^T \sigma_1 ( {Z}^T \sigma_2 ({V}^T \mathcal{X})) - \hat{W}^T \sigma_1 ( \hat{Z}^T \sigma_2 (\hat{V}^T \mathcal{X}))  +\epsilon. % 1st line
\end{split} 
\end{equation}
By adding and subtracting $W^T \hat{\sigma}_1 + \hat{W}^T\tilde{\sigma}_1 $, and rearranging (\ref{eq11}), one has
\begin{equation}\label{eq12}
    \begin{split}
    y-\hat{y} &= \tilde{W}^T \hat{\sigma}_1+\tilde{W}^T \tilde{\sigma}_1 +\hat{W}^T \tilde{\sigma}_1  +\epsilon. % 1st line
\end{split} 
\end{equation}
Taylor series expansions of ${\sigma}_1(Z^T \sigma_2 (V^T{\mathcal{X}} )) $ at $\hat{Z}^T \sigma_2(\hat{V}^T{\mathcal{X}})$ and $ {\sigma}_2({V}^T{\mathcal{X}}) $ at $\hat{V}^T{\mathcal{X}}$ is 
\begin{equation}\label{eq13}
{\small \begin{aligned}
    {\sigma}_1(Z^T \sigma_2 (V^T{\mathcal{X}})) = & {\sigma}_1(\hat{Z}^T \sigma_2 (\hat{V}^T{\mathcal{X}})) \\  &+ \hat{\sigma}_1'[Z^T\sigma_2(V^T{\mathcal{X}})  - \hat{Z}^T\sigma_2(\hat{V}^T{\mathcal{X}})]  +  \textit{O}_1(.) \\ =&
    \hat{\sigma}_1 + \hat{\sigma}_1'[Z^T\sigma_2  - \hat{Z}^T\hat{\sigma}_2]  +  \textit{O}_1(.),
\end{aligned}}
\end{equation}
\begin{equation}\label{eq14}
\begin{split}
\sigma_2 (V^T{\mathcal{X}}) = \hat{\sigma}_2 +\hat{\sigma}_2' \tilde{V}^T{\mathcal{X}} + \textit{O}_2(.).
\end{split}
\end{equation}
Note that $\textit{O}_1(.)$ and $\textit{O}_2(.)$ are shortened notations for  $\textit{O}_1(Z^T\sigma_2(V^T{\mathcal{X}}) - \hat{Z}^T\sigma_2(\hat{V}^T{\mathcal{X}}))$ and $\textit{O}_2(\tilde{V}^T{\mathcal{X}})$, respectively.   From (\ref{eq13}) and  (\ref{eq14}), it is evident that 
\begin{equation}\label{eq15}
\begin{split}
    \tilde{\sigma}_1 &= \hat{\sigma}_1'[Z^T\sigma_2  - \hat{Z}^T\hat{\sigma}_2]  +  \textit{O}_1(.),\\ 
    \tilde{\sigma}_2 & = \hat{\sigma}_2' \tilde{V}^T{\mathcal{X}} + \textit{O}_2(.).
\end{split}
\end{equation}
Substituting (\ref{eq15}) in  (\ref{eq12}), one has
\begin{equation}\label{eq16}
\begin{split}
  y-\hat{y} &= \tilde{W}^T \big{(} \hat{\sigma}_1+ \hat{\sigma}_1'[Z^T\sigma_2  - \hat{Z}^T\hat{\sigma}_2]  +  \textit{O}_1(.)   \big{)} + \hat{W}^T \tilde{\sigma}_1  +\epsilon. % 1st line
\end{split}
\end{equation}
% Adding and subtracting $\tilde{W}^T \hat{\sigma}_1'[\tilde{Z}^T\sigma_2+ \hat{Z}^T\hat{\sigma}_2 ] $, and rearranging (\ref{eq16}), yields
Rearranging (\ref{eq16}), yields
\begin{equation}\label{eq17}
\begin{split}
  y-\hat{y} =& \tilde{W}^T \big{(} \hat{\sigma}_1- \hat{\sigma}_1' \hat{Z}^T\hat{\sigma}_2 +  \hat{\sigma}_1' \hat{Z}^T\tilde{\sigma}_2  \big{)}   + \hat{W}^T \tilde{\sigma}_1   \\& % 1st line
  +[\tilde{W}^T (\hat{\sigma}_1' [\tilde{Z}^T\sigma_2 +\hat{Z}^T\hat{\sigma}_2]+ \textit{O}_1(.) ) +\epsilon ]. % nNd line
\end{split}
\end{equation}
From (\ref{eq15}), (\ref{eq17}) can be written as
%and by adding and subtracting $\tilde{W}^T \hat{\sigma}_1'\hat{Z}^T\hat{\sigma}_2' V^T{\mathcal{X}}$ one can get
\begin{equation}\label{eq18}
{\small
\begin{aligned}
 y-\hat{y} = & \tilde{W}^T \big{(} \hat{\sigma}_1- \hat{\sigma}_1' \hat{Z}^T\hat{\sigma}_2 - \hat{\sigma}_1' \hat{Z}^T\hat{\sigma}_2' \hat{V}^T \mathcal{X}     \big{)}   + \hat{W}^T \tilde{\sigma}_1    \\ % 1st line
 & +[\tilde{W}^T (\hat{\sigma}_1' [\tilde{Z}^T\sigma_2 +\hat{Z}^T\hat{\sigma}_2+\hat{Z}^T\hat{\sigma}_2' V^T{\mathcal{X}}+\hat{Z}^T \textit{O}_2(.)] % nNd line 
 \\ & +\textit{O}_1(.) ) +\epsilon ]. 
\end{aligned}}
\end{equation}
Following the same procedure for $\hat{W}^T \tilde{\sigma}_1$ and from (\ref{eq15}), one has
%by adding and subtracting  $\hat{W}^T \hat{\sigma}_1' [\hat{Z}^T \sigma_2 +\tilde{Z}^T \hat{\sigma}_2]$, one has
\begin{equation}\label{eq19}
\small{\begin{aligned}
  y-\hat{y} =&  \tilde{W}^T \big{(} \hat{\sigma}_1- \hat{\sigma}_1' \hat{Z}^T\hat{\sigma}_2 - \hat{\sigma}_1' \hat{Z}^T\hat{\sigma}_2' \hat{V}^T \mathcal{X}     \big{)}  \\% 1st line
 &  + \hat{W}^T \hat{\sigma}_1' [ \tilde{Z}^T \tilde{\sigma}_2+\tilde{Z}^T \hat{\sigma}_2] +  \hat{W}^T \hat{\sigma}_1' [ \hat{Z}^T \tilde{\sigma}_2] + \hat{W}^T \textit{O}_1(.) \\ %nNd line
&  +[\tilde{W}^T (\hat{\sigma}_1' [\tilde{Z}^T\sigma_2 +\hat{Z}^T\hat{\sigma}_2+\hat{Z}^T\hat{\sigma}_2' V^T{\mathcal{X}}+\hat{Z}^T \textit{O}_2(.)] \\& + \textit{O}_1(.) ) +\epsilon ]. % 3rd line 
\end{aligned}}
\end{equation}
Substituting (\ref{eq15}) in (\ref{eq19}) leads to
%Substituting (\ref{eq15}) in (\ref{eq19}) and adding and subtracting $\hat{W}^T \hat{\sigma}_1'  \tilde{Z}^T \hat{\sigma}_2' V^T \mathcal{X}$, one has
\begin{equation}\label{eq20}
{\small{
 \begin{aligned}
 y-\hat{y} =  & \tilde{W}^T \big{(} \hat{\sigma}_1- \hat{\sigma}_1' \hat{Z}^T\hat{\sigma}_2 - \hat{\sigma}_1' \hat{Z}^T\hat{\sigma}_2' \hat{V}^T \mathcal{X}     \big{)} \\% 1st line
 & +\hat{W}^T \hat{\sigma}_1'  \tilde{Z}^T ( \hat{\sigma}_2- \hat{\sigma}_2'\hat{V}^T \mathcal{X})   +\hat{W}^T \hat{\sigma}_1'  \tilde{Z}^T  \textit{O}_2(.) % nNd line
 \\ & +\hat{W}^T \hat{\sigma}_1'  \tilde{Z}^T \hat{\sigma}_2' V^T \mathcal{X} + 
    \hat{W}^T \hat{\sigma}_1'  \hat{Z}^T \hat{\sigma}_2' \tilde{V}^T \mathcal{X}  % 3rd line
    \\ & + 
    \hat{W}^T \hat{\sigma}_1'  \hat{Z}^T \textit{O}_2(.)+  W^T\textit{O}_1(.) + 
    \tilde{W}^T (\hat{\sigma}_1' [\tilde{Z}^T\sigma_2 % 4th line
    \\ & +\hat{Z}^T\hat{\sigma}_2+\hat{Z}^T\hat{\sigma}_2' V^T{\mathcal{X}}+\hat{Z}^T \textit{O}_2(.)] )  +\epsilon . % 5th line
 \end{aligned}}}
\end{equation}
From (\ref{eq14}) and (\ref{eq20}), one can derive $\bar{\epsilon}$ as (\ref{eq12_1}).
%\begin{equation}\label{eq22}
%\small{
% \begin{aligned}
%  &\bar{\epsilon}=    \hat{W}^T \hat{\sigma}_1'  Z^T  \textit{O}_2(.)+\hat{W}^T \hat{\sigma}_1'  \tilde{Z}^T \hat{\sigma}_2' V^T X +\\ %1st line
%& [\tilde{W}^T (\hat{\sigma}_1' {Z}^T\sigma_2 ) + \tilde{W} \hat{\sigma}_i'\hat{Z}^T \hat{\sigma}_2'\hat{V}^T X + W^T\textit{O}_1(.)  +\epsilon ]. % nNd line
% \end{aligned}}
%\end{equation}
%%%%  Don't remove these line %%%%%%%
%In case of review :  
%\begin{equation}\label{eq21}
%\small{
% \begin{split}
%  &\bar{\epsilon}=    \hat{W}^T \hat{\sigma}_1'  Z^T  \textit{O}_2(.)+\hat{W}^T \hat{\sigma}_1'  \tilde{Z}^T \hat{\sigma}_2' V^T X +\\ %1st line
%& [\tilde{W}^T (\hat{\sigma}_1' [\tilde{Z}^T\sigma_2 +\hat{Z}^T\hat{\sigma}_2+\hat{Z}^T\hat{\sigma}_2' V^TX+\hat{Z}^T \textit{O}_2(.)] ) + W^T\textit{O}_1(.)  +\epsilon ]. % nNd line
% \end{split}}
%\end{equation}
 \hfill  $\blacksquare$
\end{proof}

The following lemma provides a bound for $\bar{\epsilon}$.

\begin{lemma}\label{lemma3}
The term $\bar{\epsilon}$ (\ref{eq12_1}) satisfies the following inequality
\begin{equation}\label{eq23}
\begin{split}
    ||\bar{\epsilon}||& \le \Gamma \mu, \\
\end{split}
\end{equation}
where $\Gamma$ is an unknown positive constant, and $\mu $ is defined as follows:
\begin{equation}\label{eq25_1}
\begin{split}
   \mu = &  ||\hat{W}||_F  +||\hat{W}||_F ||\mathcal{X}||+||\hat{W}||_F ||\hat{V}||_F||\mathcal{X}||    \\ % 1st line
& +||\hat{W}||_F ||\hat{Z}||_F  ||\mathcal{X}|| +   ||\hat{Z}||_F ||\hat{V}||_F ||\mathcal{X}|| \\  % nNd line
&  + ||\hat{W}||_F ||\hat{Z}||_F ||\hat{V}||_F ||\mathcal{X}|| +||\hat{Z}||_F+1.
\end{split} 
\end{equation}
\end{lemma}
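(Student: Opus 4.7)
The strategy is to apply the triangle inequality to split $\bar\epsilon$ into its individual summands, bound each summand by submultiplicativity of the Frobenius / operator norm, and then collect the results into the structure of $\mu$ given in (\ref{eq25_1}). The constant $\Gamma$ will emerge at the end as the maximum of the coefficients obtained along the way.

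The plan proceeds as follows. First I would list the elementary ingredients: since $\sigma_1,\sigma_2$ are sigmoids, both $\sigma_k$ and their derivatives $\hat\sigma_k'$ are bounded in norm by fixed constants; by Assumption \ref{ass4} we have $\|V\|_F \le V_m$, $\|Z\|_F \le Z_m$, $\|W\|_F \le W_m$; by Assumption \ref{ass5} the approximation error $\epsilon$ is bounded; and the tilde quantities satisfy $\|\tilde{W}\|_F \le W_m + \|\hat W\|_F$, with analogous inequalities for $\tilde Z$ and $\tilde V$. Next I would obtain bounds for the Taylor remainders $O_1(\cdot)$ and $O_2(\cdot)$: using their defining identities in (\ref{eq13})–(\ref{eq15}) together with the boundedness of $\sigma_1,\sigma_2,\hat\sigma_1,\hat\sigma_2$ and their derivatives, one gets estimates of the form $\|O_2(\cdot)\| \le c_0 + c_0'\|\tilde V\|_F\|\mathcal X\|$ and $\|O_1(\cdot)\| \le c_1 + c_1'\|\hat Z\|_F$; the trick here is to avoid quadratic terms by exploiting that $\sigma_k - \hat\sigma_k$ is already a bounded quantity.

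With these ingredients in hand, I would estimate each of the six summands appearing in (\ref{eq12_1}) separately. For the first summand $\hat W^T\hat\sigma_1' Z^T O_2(\cdot)$, submultiplicativity and $\|Z\|_F \le Z_m$ together with the bound on $O_2$ give contributions of the form $\|\hat W\|_F$, $\|\hat W\|_F\|\mathcal X\|$ and $\|\hat W\|_F\|\hat V\|_F\|\mathcal X\|$. The second summand similarly yields $\|\hat W\|_F\|\mathcal X\|$ and $\|\hat W\|_F\|\hat Z\|_F\|\mathcal X\|$. The summand $\tilde W^T(\hat\sigma_1' Z^T\sigma_2)$ produces a constant plus $\|\hat W\|_F$; the summand $\tilde W^T\hat\sigma_1'\hat Z^T\hat\sigma_2'\hat V^T\mathcal X$ gives $\|\hat Z\|_F\|\hat V\|_F\|\mathcal X\|$ and $\|\hat W\|_F\|\hat Z\|_F\|\hat V\|_F\|\mathcal X\|$; the summand $W^T O_1(\cdot)$ contributes a constant plus $\|\hat Z\|_F$; and finally $\epsilon$ contributes only a constant.

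Adding all these contributions by the triangle inequality, the right-hand side is a nonnegative linear combination of exactly the seven terms that appear inside $\mu$ (together with the additive $+1$ absorbing all constant contributions). Letting $\Gamma$ denote the largest of the finitely many coefficients obtained above yields $\|\bar\epsilon\| \le \Gamma\mu$, as claimed. The main technical obstacle I expect is not any single bound but rather the bookkeeping: one must be careful when splitting $\tilde W$, $\tilde Z$, $\tilde V$ into ideal and estimated parts so that no undesired cross terms (such as $\|\hat V\|_F^2\|\mathcal X\|^2$) creep in, and one must verify that the crude linearization bound on $O_1$ and $O_2$ — obtained from boundedness of the sigmoid rather than from a quadratic Taylor remainder — is sufficient to stay within the terms of $\mu$.
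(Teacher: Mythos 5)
Your proposal is correct and follows essentially the same route as the paper's proof: bound the Taylor remainders linearly as $\|O_2(\cdot)\|\le c_0+c_1\|\mathcal X\|+c_2\|\hat V\|_F\|\mathcal X\|$ and $\|O_1(\cdot)\|\le c_3+c_4\|\hat Z\|_F$ using boundedness of the sigmoids, then apply the triangle inequality and submultiplicativity term by term in (\ref{eq12_1}), absorb the ideal-weight norms via Assumption \ref{ass4}, and take $\Gamma$ as the maximum of the resulting coefficients. Your term-by-term accounting matches the paper's inequality (\ref{eq25})--(\ref{eq30}), so no gap remains.
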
 

\begin{proof}
Following the procedure in \cite[Lemma 4.3.1]{Lewisbook}, \cite{park2005direct}, as well as using the fact that $\sigma_k(.)$ and its derivative are bounded, and from (\ref{eq15}), one can show that 
\begin{equation}\label{eq24}
\begin{split}
    ||\textit{O}_2(.)||& \le c_0 +c_1||\mathcal{X}||+c_2 ||\hat{V}||_F||\mathcal{X}||, \\
    ||\textit{O}_1(.)||& \le c_3 +c_4||\hat{Z}||_F,\\
\end{split}
\end{equation}
where $c_i$, $i \in \{0,..,4\}$ are positive constants. Using Assumption \ref{ass4}, and by substituting  (\ref{eq24}) in (\ref{eq12_1}), one has
\begin{equation}\label{eq25}\small{
\begin{aligned}
     ||\bar{\epsilon}||\le &||\hat{W}^T \hat{\sigma}_1'  Z^T  || \ ||\textit{O}_2(.)||+ ||\hat{W}^T \hat{\sigma}_1'  \tilde{Z}^T \hat{\sigma}_2' V^T \mathcal{X}|| \\ %1st line
& + ||\tilde{W}^T \hat{\sigma}_1' {Z}^T\sigma_2 || + ||\tilde{W} \hat{\sigma}_i'\hat{Z}^T \hat{\sigma}_2'\hat{V}^T \mathcal{X} || \\  %2nd line 
& +|| W^T\textit{O}_1(.)  +\epsilon||  \\ %3rd
\le & d_0 ||\hat{W}||_F ( c_0 +c_1||\mathcal{X}||+c_2 ||\hat{V}||_F||\mathcal{X}||) \\ % 4th line
& + d_1||\hat{W}||_F \ ||\mathcal{X}||   +d_2||\hat{W}||_F ||\hat{Z}||_F  ||\mathcal{X}|| + d_3 ||\hat{W}||_F \\ %5th
& + d_4  ||\hat{Z}||_F ||\hat{V}||_F ||\mathcal{X}|| 
 + d_5 ||\hat{W}||_F ||\hat{Z}||_F ||\hat{V}||_F ||\mathcal{X}|| \\ & +d_6 (c_3 +c_4||\hat{Z}||_F) +d_7, 
\end{aligned}}
\end{equation}
where $d_i$, $i \in \{1,..,7\}$ are positive constants. Rearranging  (\ref{eq25}) yields
\begin{equation}\label{eq26}\small{
\begin{aligned}
     ||\bar{\epsilon}|| \le &  (d_0 c_0+d_3) ||\hat{W}||_F  +(c_1+d_1)||\hat{W}||_F ||\mathcal{X}|| \\ % 1st line
& +c_2 ||\hat{W}||_F ||\hat{V}||_F||\mathcal{X}||  +d_2||\hat{W}||_F ||\hat{Z}||_F  ||\mathcal{X}|| \\ %2nd
&  + d_4  ||\hat{Z}||_F ||\hat{V}||_F ||\mathcal{X}|| + d_5 ||\hat{W}||_F ||\hat{Z}||_F ||\hat{V}||_F ||\mathcal{X}|| \\  %3rd
& +d_6 c_3+d_7 +c_4||\hat{Z}||_F). % 4th line
\end{aligned}}
\end{equation}
From (\ref{eq26}), one has
\begin{equation}\label{eq28}\small{
\begin{aligned}
     ||\bar{\epsilon}|| \le & \Gamma( ||\hat{W}||_F  +||\hat{W}||_F ||\mathcal{X}||+||\hat{W}||_F ||\hat{V}||_F||\mathcal{X}||    \\ % 1st line
& +||\hat{W}||_F ||\hat{Z}||_F  ||\mathcal{X}|| +   ||\hat{Z}||_F ||\hat{V}||_F ||\mathcal{X}|| \\  % 2nd line
&  +||\hat{W}||_F ||\hat{Z}||_F ||\hat{V}||_F ||\mathcal{X}|| +||\hat{Z}||_F+1), % 3rd line
\end{aligned}}
\end{equation}
with 
\begin{equation}\label{eq30}\small{
\begin{aligned}
    \Gamma \triangleq max\{d_0 c_0+d_3,c_1+d_1, c_2,d_2, d_4, d_5, d_6 c_3+d_7 +c_4 \}.
\end{aligned}}
\end{equation}
From (\ref{eq30}), it is straightforward to verify that selecting $\mu$ as (\ref{eq25_1}) implies (\ref{eq23}). 
\hfill $\blacksquare$
\end{proof}

\begin{remark}\label{rem2}
Lemma \ref{lemma3} provides an upper bound for $\bar{\epsilon}$ using a three-layer neural network. Note that this error is bounded as $||\bar{\epsilon}|| \le \Gamma \mu$, where $\Gamma$ is considered to be an unknown constant. Therefore, one can write it as $||\bar{\epsilon}|| \le \bar{\epsilon}_M$, where $\bar{\epsilon}_M$ considered to be unknown. Note that we consider this bound for analytical purposes, and its exact value is not required in our control design.
\end{remark}

\begin{remark}\label{rem3}
For any sufficiently smooth function over a compact set, the result of the Stone-Weierstrass Theorem stipulates that the function can be approximated using enough number of neurons, and the universal approximation property of NN guarantees the boundedness of the approximation error \cite{stone1948generalized,patre2008asymptotic,lewis,hornik1989multilayer}.
\end{remark}

\section{Controller Design}
Let us define $\vec{\delta} = [\delta^T_1, ...,\delta_N^T] ^T \in \mathbb{R}^{nN}$ and $\vec{e} = [e^T_1, ...,e_N^T] ^T \in \mathbb{R}^{nN}$. Then, one can write (\ref{eq5_1}) and (\ref{eq6_1}) as
\begin{equation}\label{eq31}
    \begin{split}
        \vec{e} &= [(L+B) \otimes I_n ] (\vec{x}_1 -\vec{p}_l - d), \\
         \vec{\delta} &= [(L+B) \otimes I_n ] (\vec{x}_2 -\vec{v}_l),
    \end{split}
\end{equation}
where $\vec{p}_l = \mathbf{1}_N \otimes p_l$, $d=[d_1^T,...,d_N^T]^T$ and $\vec{v}_l = \mathbf{1}_N \otimes v_l$. 
Let us define an auxiliary variable of $\zeta_i = \delta_i+k_1 e_i$ as in \cite{lewis}, where $k_1$ is a positive constant gain. Then the stacked vector $\zeta = [\zeta^T_1, ...,\zeta_N^T] ^T \in \mathbb{R}^{nN}$ is defined as 
\begin{equation}\label{eq33}
    \begin{split}
       \zeta =  \vec{\delta} + k_1 \vec{e}.
    \end{split}
\end{equation}
One can derive the time-derivative of (\ref{eq33}) as
\begin{equation}\label{eq34}
\dot{\vec{\zeta}} = [(L+B)\otimes I_n] \big{(}f(\vec{x}) + g(\vec{x}) u + w- \vec{f}_l \big{)}  + k_1 \delta , 
\end{equation}
where $\vec{f}_l = \mathbf{1}_N \otimes f_l$.

\subsection{RISE Feedback}

Let us define the filtered error as 
\begin{equation}\label{eq40}
    r = \dot{\zeta} + k_2\zeta,
\end{equation}
where $k_2$ is a positive constant. The time-derivative of (\ref{eq40}) is given by
\begin{equation}\label{eq41}
    \dot{r} = \ddot{\zeta} + k_2\dot{\zeta}.
\end{equation}
From Assumption \ref{ass2}, Remark \ref{rem1}, Lemma \ref{lem_l+b}, and (\ref{eq41}), one can express the following equation similar to \cite{yang2015robust} as
\begin{equation}\label{eq44}
   \mathcal{G}(\vec{x}) Q \dot{r} = \mathcal{G}(\vec{x}) Q (\ddot{\zeta} +  k_2\dot{\zeta}),
\end{equation}
where $\mathcal{G}(\vec{x}) \triangleq g^{-1}(\vec{x})$ and $Q \triangleq [(L+B)\otimes I_n]^{-1}$. By adding and subtracting $\frac{1}{2} \dot{\mathcal{G}}(\vec{x}) Qr + k_2\zeta$, and from (\ref{eq31})-(\ref{eq34}),  (\ref{eq44}) can be rewritten as 
\begin{equation}\label{eq45}
\begin{split}
   \mathcal{G}(\vec{x}) Q \dot{r} =& \mathcal{G}(\vec{x}) \Big{(}  (k_2+k_1) f(\vec{x}) + (k_2+k_1) g(         \vec{x}) u \\ %1st line
   &+   (k_2+k_1) w - (k_2+k_1)\vec{f}_l + k_1 k_2 (\vec{x}_2 -\vec{v_l}) \\  %nNd line
   &  + \dot{f}(\vec{x}) +  \dot{g}(\vec{x}) u +{g}(\vec{x}) \dot{u}  + \dot{w}  - \dot{\vec{f}}_l \Big{)} \\ % 3nd line
   &   + \frac{1}{2}\dot{\mathcal{G}}(\vec{x}) Q r + k_2\zeta - \frac{1}{2}\dot{\mathcal{G}}(\vec{x}) Q r - k_2\zeta \\ %4rd line
    = & -\frac{1}{2}\dot{\mathcal{G}}(\vec{x}) Qr - k_2\zeta + \dot{u} +N_1 + N_2,
\end{split} 
\end{equation}
where 
\begin{equation}\label{eq46}
\begin{split}
    \dot{\mathcal{G}}(\vec{x}) &=  \frac{\partial \mathcal{G} }{\partial{\vec{x}_1}} \vec{x}_2  + \frac{\partial \mathcal{G} }{\partial{\vec{x}_2}} \big{(}f(\vec{x}) + g(\vec{x})u + w\big{)},  \\ 
     \dot{{f}}(\vec{x}) &=  \frac{\partial f }{\partial{\vec{x}_1}} \vec{x}_2  + \frac{\partial f }{\partial{\vec{x}_2}} \big{(}f(\vec{x}) + g(\vec{x})u + w\big{)},  \\
        \dot{\vec{f}}_l & =  \mathbf{1}_N \otimes \big{(}\frac{\partial f_l }{\partial{{p}_l}} {v}_l  + \frac{\partial f_l }{\partial{{v}_l}} f_l  \big{)}.
     \end{split}
\end{equation}
Using the fact that $\mathcal{G}(\vec{x}) g(\vec{x}) = I_{nN}$, and from Assumption \ref{ass2}, $N_1$ and $N_2$ are given as follows:
\begin{equation}\label{eq47}
\begin{split}
  N_1=& (k_1+ k_2) (\mathcal{G}(\vec{x})  f(\vec{x})+  u)  + \mathcal{G}(\vec{x})\frac{\partial f }{\partial{\vec{x}_1}} \vec{x}_2   \\  % 1st line 
& + \mathcal{G}(\vec{x})\frac{\partial f }{\partial{\vec{x}_2}} \big{(}f(\vec{x}) 
    + g(\vec{x})u \big{)}  + k_2 \vec{x}_2 + k_2k_1 (\vec{x}_1-d) 
\\  % nd line 
 &   + \frac{1}{2} \Big{(} \frac{\partial \mathcal{G} }{\partial{\vec{x}_1}} \vec{x}_2   + \frac{\partial \mathcal{G} }{\partial{\vec{x}_2}} \big{(}f(\vec{x})  + g(\vec{x})u \big{)} \Big{)} \big{(} f(\vec{x}) - g(\vec{x})u \\  % 3rd line 
  &   + (k_1 + k_2) \vec{x}_2 + k_1 k_2 (\vec{x}_1 -d)  \big{)} + k_1k_2 \mathcal{G}(\vec{x}) \vec{x}_2,
\end{split}
\end{equation}
\begin{equation}\label{eq48}
\begin{split}
  N_2=& (k_1 + k_2) (\mathcal{G}(\vec{x})  w -\mathcal{G}(\vec{x})  \vec{f}_l) + \mathcal{G}(\vec{x}) ( \dot{w}- \dot{\vec{f}}_l) -k_1k_2 \vec{p}_l% 1st line
   \\& +  \mathcal{G}(\vec{x}) \frac{\partial f }{\partial{\vec{x}_2}} w - k_1k_2  \mathcal{G}(\vec{x})  {\vec{v}}_l    -   \frac{\partial \mathcal{G} }{\partial{\vec{x}_2}} w g(\vec{x}) u -k_2 \vec{v}_l\\ 
  & +\frac{1}{2} \Big{(} \frac{\partial \mathcal{G} }{\partial{\vec{x}_1}} \vec{x}_2  + \frac{\partial \mathcal{G} }{\partial{\vec{x}_2}} \big{(}f(\vec{x})   + g(\vec{x})u \big{)} \Big{)}  (w-\dot{\vec{v}}_l). % 3rd line
\end{split} 
\end{equation}
Note that $N_1=[N^T_{1_1},...,N^T_{1_N}]^T\in \mathbb{R}^{nN}$ and $N_2=[N^T_{2_1},...,N^T_{2_N}]^T\in \mathbb{R}^{nN}$ consist of unknown terms. The difference between $N_{1_i}$ and $N_{2_i}$ is that, for each agent, we use NN to approximate the variable $N_{1_i}$ while we use a time-varying robustifying term to compensate for $N_{2_i}$. The variable $N_{1_i}$ is a function of available filtered error $\zeta_i$, states, and agent control input. Note that feeding the agent control input into the input layer of the NN causes the circular design problem. To avoid this problem, similar to \cite{park2005direct,shin2018neural}, we construct the NN from the NN weights matrices, states of the system dynamics, the robustifying term, and $\zeta_i$. This consideration distinguishes our results from \cite{yang2015robust}. 

Considering (\ref{eq45}), the NN-based control law is derived as
\begin{equation}\label{eq42_1}
\begin{split}
\dot{u} = -(k_3+k_4) \dot{\zeta} - k_2(k_3+k_4) {\zeta}  -  \hat{N}_1 - \kappa(t)\text{sgn}(\zeta(t)),
\end{split} 
\end{equation}
where $k_3$, $k_4$ are positive constants, $\hat{N}_1$ is the output of NN, and $\kappa(t)=[diag(\kappa_1,...,\kappa_N)] \otimes I_{n} \in \mathbb{R}^{nN \times nN}$ is the time-varying gain. The last term in (\ref{eq42_1}) is responsible for compensating the unknown term of $N_2$ in (\ref{eq45}) and the NN approximation error. For the agent $i$, let us define $\hat{N}_{2_i}  \triangleq \kappa_i \text{sgn}(\zeta_i)$. The block diagram of proposed control law is illustrated in Fig. \ref{fig2}.

We choose the following NN input vector for each agent:  % We said that on top of Eq. 43 =>  similar to \cite{park2005direct}
\begin{equation}\label{eq44_1}
\begin{split}
\bar{\mathcal{X}}_i =& [ p^T_i, v^T_i, \zeta^T_i, ||\hat{V}_i||_F, ||\hat{Z}_i||_F, ||\hat{W}_i||_F, \kappa_i(t)]^T \in \mathbb{R}^{3n+4},
\end{split} 
\end{equation}
and $\mathcal{X}_i=[1,\bar{\mathcal{X}}_i ]^T \in \mathbb{R}^{3n+5}$. The NN for $i$-th agent is described as (\ref{eq8}), i.e., $y({\mathcal{X}_i}) \triangleq N_{1_i}$ and $\hat{y}({\mathcal{X}_i}) \triangleq \hat{N}_{1_i}$. Let $\tilde{V}=diag(\tilde{V}_i) \in \mathbb{R}^{(3n+5)N \times m_1 N}$,
%$\tilde{V}=diag(\tilde{V}_i) \in \mathbb{R}^{(n_1+1)N \times m_1 N}$,   n_1 = 3n+4
$\tilde{Z}=diag(\tilde{Z}_i) \in \mathbb{R}^{(m_1+1)N \times m_2N}$, $\tilde{W}=diag(\tilde{W}_i) \in \mathbb{R}^{(m_2+1)N \times n N}$,
%$\tilde{W}=diag(\tilde{W}_i) \in \mathbb{R}^{(m_2+1)N \times n_2 N}$,
$\hat{V}=diag(\hat{V}_i)$, $\hat{Z}=diag(\hat{Z}_i)$, $\hat{W}=diag(\hat{W}_i)$,  $\hat{\sigma}_s={[{\hat{\sigma}}^T_1,...,{\hat{\sigma}}^T_N]} ^T$, and ${\hat{\sigma}'}_s=[{\hat{\sigma}'^T}_1,...,{\hat{\sigma}'^T}_N]^T,  s \in \{1,2\}$. Let us define $\mathcal{X}=[\mathcal{X}^T_1, ..., \mathcal{X}^T_N]^T$. 

Define $\hat{N}_1 =\hat{W}^T \sigma_1 ( \hat{Z}^T \sigma_2 (\hat{V}^T \mathcal{X}))$ and $\tilde{N}_1 =[\tilde{N}^T_{1_1}, ..., \tilde{N}^T_{1_N}]^T,$ with $\tilde{N}_{1_i}= {N}_{1_i}-\hat{N}_{1_i}$ as
\begin{equation}\label{eq_46_N}
    \begin{split}
       \tilde{N}_{1_i} & = \tilde{W}_i ^T \big{(} \hat{\sigma}_1 - \hat{\sigma}_1' \hat{Z}_i^T {\hat{\sigma}_2}- \hat{\sigma}_1' \hat{Z}_i^T  \hat{\sigma}_2' \hat{V}_i^T{\mathcal{X}_i} \big{)}  \\  %1st
& + \hat{W}_i ^T  \hat{\sigma}_1' \tilde{Z}_i ^T \big{(} \hat{\sigma}_2 -  \hat{\sigma}_2' \hat{V}_i^T \mathcal{X}_i \big{)}  \\ %2nd line
 & +\hat{W}_i ^T \big{(}  \hat{\sigma}_1' \hat{Z}_i ^T \hat{\sigma}_2' \tilde{V}_i^T \mathcal{X}_i \big{)}.
    \end{split}
\end{equation}
For the sake of the notation simplicity, the index of $i$ and the arguments of $\sigma_1$ and $\sigma_2$ are dropped. From (\ref{eq40}), (\ref{eq45}), (\ref{eq42_1}), Lemma \ref{lem2}, and Remark \ref{rem2}, one has
\begin{equation}\label{eq43_1}
\begin{split}
\mathcal{G}(\vec{x}) Q \dot{r} =&  -\frac{1}{2}\dot{\mathcal{G}}(\vec{x}) Qr - k_2\zeta  -(k_3+k_4) {r}  - \kappa(t)\text{sgn}(\zeta) \\& +\tilde{N}_1+N_2 + \epsilon,
\end{split} 
\end{equation}
where $\epsilon =[\epsilon_1^T, ..., \epsilon^T_N]^T \in \mathbb{R}^{nN}$.

\begin{remark}\label{rem4}
Let $\vec{x} \in \Omega_{\vec{x}}$, where set $\Omega_{\vec{x}}$ is a compact set. Following the same procedure as in \cite{yang2015robust} and considering the Assumptions \ref{ass2} and \ref{ass3}, if the control input signal is bounded, i.e., $u\in L_\infty$, the variable $\dot{\vec{x}}$ is bounded. As the $\vec{p}_l$, $\vec{v}_l$ and $\vec{f}_l$ are bounded (Assumption \ref{ass1}), signals $\vec{e}, \delta$, and $\zeta$ remain bounded, as well as $\dot{\zeta}$, i.e., $\dot{\zeta}\in\Omega_{\dot{\zeta}}$ (see (\ref{eq34})). Moreover, considering $\kappa \in \Omega_\kappa$, $\zeta \in \Omega_\zeta$, $Y \in \Omega_Y$ with $Y=[||\hat{W}||_F, ||\hat{Z}||_F, ||\hat{V}||_F]^T$, and with the help of (\ref{eq42_1}), variable $\dot{u}$ is bounded, i.e., $\dot{u} \in L_\infty$.  Then, take the time-derivative of (\ref{eq44_1}), consider Assumptions \ref{ass1}, \ref{ass3}, (\ref{eq34}) and (\ref{eq40}). Then it follows that  $\dot{\mathcal{X}}$ remains bounded in the compact set of $\Omega_\mathcal{X}$ with $\Omega_\mathcal{X}  \triangleq \Omega_\vec{x} \times \Omega_\vec{u} \times \Omega_\kappa \times \Omega_Y$, where $\Omega_\vec{u}$ is the compact set in which $u$ is bounded.
\end{remark}

Let us define an agent's NN weights matrices tuning laws as 
{\small{\begin{equation}\label{eq_tune}
\begin{split}
    {\hat{W}}_i(t)=& \alpha_i \Big{(}k_2  \int_0^t \big{(} \hat{\sigma}_1 - \hat{\sigma}_1' \hat{Z}_i^T(s) {\hat{\sigma}_2} \\& % 1st 
    -\hat{\sigma}_1' \hat{Z}_i^T(s)  \hat{\sigma}_2' \hat{V}_i^T(s){\mathcal{X}_i(s)} \big{)} \zeta_i^T(s) ds, - \|\zeta_i\|_1 {\hat{W}}_i  \Big{)}  \\  % 2nd
{\hat{Z}}_i(t)  =& \beta_i \Big{(} k_2  \int_0^t  \big{(} \hat{\sigma}_2 -  \hat{\sigma}_2' \hat{V}_i^T(s) \mathcal{X}_i \big{)} \zeta_i^T(s) \hat{W}^T_i(s)  \hat{\sigma}_1' ds  \\ & - \|\zeta_i\|_1 {\hat{Z}}_i  \Big{)} , \\ % 3rd
{\hat{V}}_i(t) =&  \gamma_i \big{(} k_2 \int_0^t \mathcal{X}_i(s)  \zeta^T_i(s) \hat{W}^T_i(s)   \hat{\sigma}_1' \hat{Z}^T_i(s) \hat{\sigma}_2' ds \\& - \|\zeta_i\|_1 {\hat{V}}_i  \big{)}, %4th
\end{split}\end{equation}}}\noindent
with $\alpha_i,\beta_i,\gamma_i$ being positive constants. Let us define $\alpha =diag(\mathbf{1}_{m_2+1}^T \alpha_1,...,\mathbf{1}_{m_2+1}^T \alpha_N) \in \mathbb{R}^{(m_2+1)N\times (m_2+1)N}$, $\beta =diag(\mathbf{1}_{m_1+1}^T \beta_1,...,\mathbf{1}_{m_1+1}^T \beta_N) \in \mathbb{R}^{(m_1+1)N\times (m_1+1)N} $ and $\gamma=diag(\mathbf{1}_{3n+5}^T \gamma_1,...,\mathbf{1}_{3n+5}^T \gamma_N) \in \mathbb{R}^{(3n+5)N\times (3n+5)N}$. Let the time-varying gain $\kappa_i(t)$ be defined as \cite{yang2015robust,fan2017asymptotic}
\begin{equation}\label{eq_kappa}
\begin{split}
    \dot{\kappa}_i(t) = r_i^T \text{sgn}(\zeta_i).
\end{split}\end{equation}
From (\ref{eq40}) and by taking the integral from (\ref{eq_kappa}), one has
\begin{equation}\label{eq_kappa_N}
\begin{split}
    {\kappa}_i(t) &=\int_0^t \big{(}\dot{\zeta}_i^T(s) \text{sgn}(\zeta_i(s)) + k_2{\zeta}_i^T(s) \text{sgn}(\zeta_i(s))\big{)}ds \\ & =  \sum _{j=1}^{n} \int_0^t \frac{d \zeta_{i_j}(s)}{ds}  \text{sgn}(\zeta_{i_j}(s))  ds +   \int_0^t k_2\| {\zeta}_i(s)\|_1 ds 
    \\ & =  \| {\zeta}_i(t)\|_1 - \| {\zeta}_i(0)\|_1+  k_2 \int_0^t \| {\zeta}_i(s)\|_1 ds .
\end{split}
\end{equation}

\begin{figure*}[t]
    \centering
    \includegraphics[scale=.77]{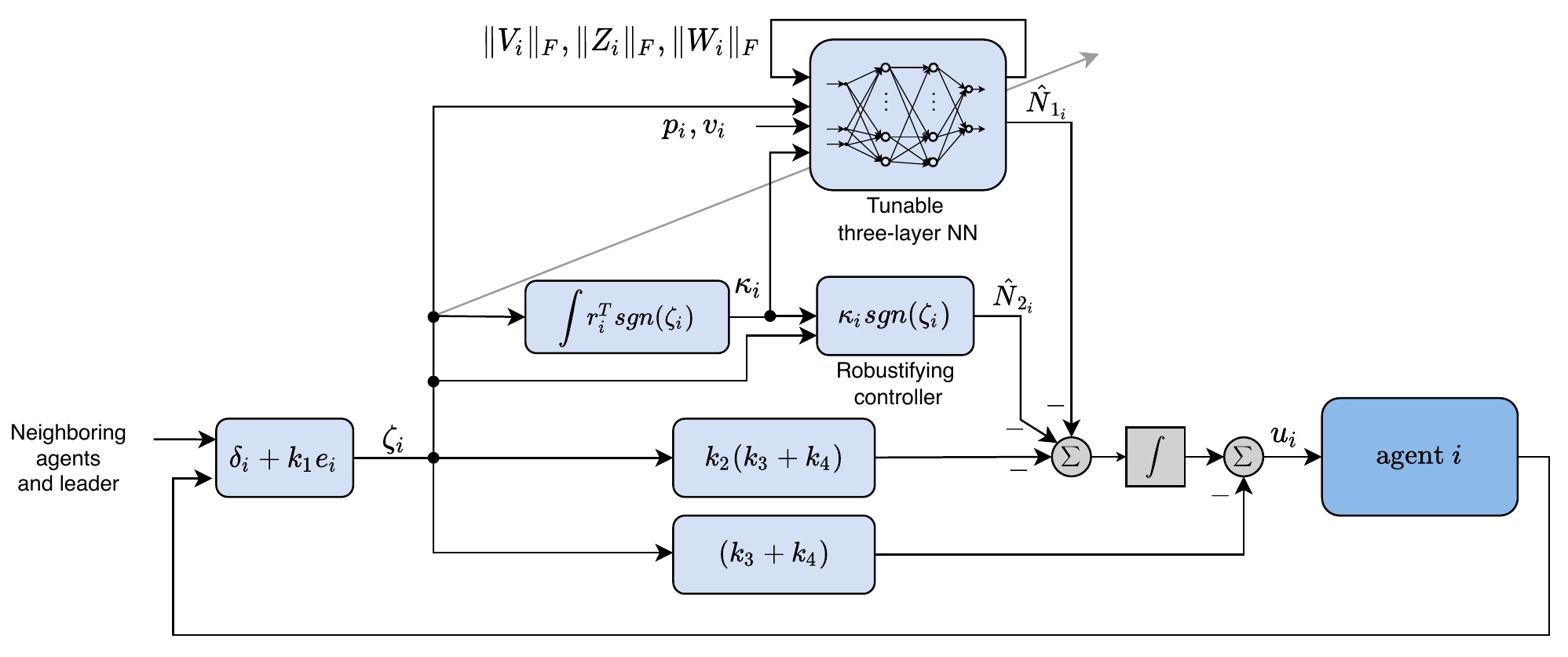}    
  \caption{The block diagram of proposed control law (\ref{eq42_1}) for agent $i$. Note that if $b_i \ne 0$, then the agent $i$ has the information of the leader.}
   \label{fig2}
\end{figure*}

\begin{remark}
The unknown term $N_2$ includes disturbance, its time-derivative, the leader's velocity, and its time-derivative. Following the similar procedure as in \cite{yang2015robust,xian2004continuous}, considering the smooth functions for $f$ and $g$ in system dynamics as well as Assumptions \ref{ass1} and \ref{ass3}, one has $||N_2|| \le d_M$ and $||\dot{N}_2||\le d_{dM}$, where $d_M$ and $d_{dM}$ are considered to be unknown.
\end{remark}

%The boundedness of $||\dot{\bar{\epsilon}}(\mathcal{X},\dot{\mathcal{X}})||$ can also \hl{be seen by taking the time-derivative of} (\ref{eq12_1}), considering (\ref{eq42_1}) and (\ref{eq_tune}), Assumption \ref{ass5}, boundedness of the activation function and its derivative, and variables $\vec{x}$, $\dot{\vec{x}}, \dot{\zeta}$ over the compact set of $\Omega_\mathcal{X}$%Similar to \cite{yang2015robust}

Take the time-derivative of (\ref{eq12_1}), recall Remark \ref{rem4}, and consider Assumption \ref{ass5}, the boundedness of the activation function and its derivative, $\vec{x}$, $\dot{\vec{x}}$, $Y$, and $\dot{\mathcal{X}}$ over the compact set of $\Omega_\mathcal{X}$, then the boundedness of $||\dot{\bar{\epsilon}}(\mathcal{X},\dot{\mathcal{X}})||$ follows -- namely, $||\dot{\bar{\epsilon}}(\mathcal{X},\dot{\mathcal{X}})|| \le \bar{\epsilon}_{dM}$.
 %Similar to \cite{yang2015robust}, 
 We consider $\bar{\epsilon}_{M}$ and $\bar{\epsilon}_{dM}$ to be unknown while authors in \cite{dierks2008neural,fischer2013saturated,patre2008asymptotic} considered these bounds to be known. 

For the agent $i$, one has $||\bar{\epsilon}_i|| \le \bar{\epsilon}_{M_i}$,  $||\dot{\bar{\epsilon}}_i(\mathcal{X}_i,\dot{\mathcal{X}_i})|| \le \bar{\epsilon}_{dM_i}$, $||N_{2_i}|| \le d_{M_i}$, and $||\dot{N}_{2_i}||\le d_{dM_i}$. Let us define $\bar{N}_{2_i} \triangleq  N_{2_i} + \bar{\epsilon}_{i}$. Note that values of $\bar{\epsilon}_{M_i}$, $\bar{\epsilon}_{dM_i}$, $d_{M_i}$, and $d_{{dM}_i}$ are considered to be unknown, and they only are used for analytical purposes.

For the agent $i$, consider (\ref{eq_46_N}) and recall the fact that the activation function and its derivative are bounded. Using the Assumption \ref{ass4}, for an arbitrary set of $\Omega_\vec{x}$, the inputs of the NN are bounded and their derivatives $(\mathcal{X}_i,\dot{\mathcal{X}_i})$ are bounded. Therefore, we can conclude that there exists upper bounds for $\tilde{N}_{1i}$ and $\dot{\tilde{N}}_{1i}$, namely $\|\tilde{N}_{1i}\|\le N_{Mi}$ and $\|\dot{\tilde{N}}_{1i}\|\le N_{dMi}$, which are considered to be unknown. This fact also can be found in \cite{yang2015robust,fan2017asymptotic,xian2004continuous,patre2008asymptotic}.

As continuous differentiability is a sufficient condition for being locally Lipschitz \cite[Corollary 4.1.1]{scholtes2012introduction}, \cite[Theorem 8.4]{khalil} can be written as follows:

\begin{lemma}[\hspace{-.1mm}\cite{xian2004continuous}]\label{khalil}
Consider the close-loop system $\dot{\chi} = f(\chi,t)$, where $f: D \times [0,\infty)$ with $D \in \mathbb{R}^{n}$ being a set containing $\chi=0$. Let a continuously differentiable scalar function $V: D \times [0,\infty)$ satisfy the following inequalities
\begin{equation}\label{eq_khalil}
\begin{split}
    W_1(\chi) \le &V(\chi,t) \le W_2(\chi), \\ 
     \dot{V}(\chi,t) \le & - W_3(\chi), \ \ \ \forall t\ge 0, \forall \chi \in D,
\end{split}    
\end{equation}
where $W_1$, $W_2$ are continuous positive definite functions and $W_3$ is a uniformly continuous positive semi-definite function. If (\ref{eq_khalil}) holds, select a closed ball $B(0,R)$ with the radius of $R>0$, and let $\rho < min_{||\chi||=R} W_1(\chi)$. Solutions of $\dot{\chi} =f(\chi,t)$ with $\chi(t_0) \in \{\chi \in D | \ W_2(\chi) \le \rho \}$ are bounded and 
\begin{equation}\label{eq_khalil1}
    W_3(\chi) \rightarrow 0 \ \text{as} \  t\rightarrow \infty.
\end{equation}
\end{lemma}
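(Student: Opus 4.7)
The plan is to prove the two conclusions of the statement in sequence: first the boundedness of the trajectory $\chi(t)$, and then the asymptotic behavior $W_3(\chi(t)) \to 0$ as $t \to \infty$.

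First I would exploit the hypothesis $\dot{V}(\chi,t) \le -W_3(\chi) \le 0$, which implies that $V(\chi(t),t)$ is non-increasing along trajectories. For any initial condition lying in $\{\chi \in D \mid W_2(\chi) \le \rho\}$, the upper sandwich yields $V(\chi(t_0),t_0) \le W_2(\chi(t_0)) \le \rho$, and monotonicity extends this to $V(\chi(t),t) \le \rho$ for all $t \ge t_0$. Combined with the lower sandwich $W_1(\chi(t)) \le V(\chi(t),t) \le \rho$ and the choice $\rho < \min_{\|\chi\|=R}W_1(\chi)$, a standard continuity argument precludes the trajectory from reaching the sphere $\|\chi\|=R$, so $\chi(t)$ stays inside the open ball $B(0,R)$ for all $t \ge t_0$, giving boundedness.

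Next, to establish the asymptotic claim, I would integrate the dissipation inequality from $t_0$ to $t$:
\begin{equation*}
\int_{t_0}^{t} W_3(\chi(s))\,ds \le V(\chi(t_0),t_0) - V(\chi(t),t) \le V(\chi(t_0),t_0).
\end{equation*}
Since $V(\chi(t),t) \ge W_1(\chi(t)) \ge 0$, the right-hand side is uniformly bounded in $t$, so the nondecreasing integral $\int_{t_0}^{\infty} W_3(\chi(s))\,ds$ converges to a finite limit. I would then invoke Barbalat's lemma to conclude that $W_3(\chi(t)) \to 0$.

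The main obstacle will be the Barbalat step, because it requires uniform continuity of the composition $t \mapsto W_3(\chi(t))$, whereas the hypothesis only supplies uniform continuity of $W_3$ as a function on $D$. I would address this by using the boundedness of $\chi(t)$ in the forward-invariant ball $B(0,R)$ together with local boundedness of $f(\chi,t)$ on that compact set to deduce that $\dot{\chi}$ is bounded; this makes $\chi(\cdot)$ Lipschitz in $t$, and composing with the uniformly continuous $W_3$ transfers uniform continuity to the time argument. Once this is in place, Barbalat's lemma delivers $W_3(\chi(t)) \to 0$ as $t \to \infty$, completing the proof.
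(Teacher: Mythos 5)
This lemma is imported by the paper from \cite{xian2004continuous} (itself a restatement of Theorem 8.4 in Khalil) and is stated without proof, so there is no in-paper argument to compare against; your proposal reproduces the standard textbook proof of that theorem and is essentially correct: monotonicity of $V$ plus the sandwich $W_1\le V\le W_2$ and the choice $\rho<\min_{\|\chi\|=R}W_1(\chi)$ traps the trajectory in $B(0,R)$, integrating $\dot V\le -W_3$ bounds $\int_{t_0}^\infty W_3(\chi(s))\,ds$, and Barbalat's lemma finishes the argument. The one point to be careful about is your Barbalat step: boundedness of $\dot\chi$ does not follow from the lemma's hypotheses as literally stated (they say nothing quantitative about $f$), but from the standing regularity assumptions of Khalil's Theorem 8.4 ($f$ locally Lipschitz in $\chi$ uniformly in $t$ with $f(0,t)$ uniformly bounded), which give $\|f(\chi,t)\|$ bounded on the forward-invariant ball uniformly in $t$; with that made explicit, $\chi(\cdot)$ is Lipschitz in $t$, the composition with the uniformly continuous $W_3$ is uniformly continuous, and the conclusion follows.
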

% A closed ball in a metric space with Euclidean distance is the set of points where their distances to the center at $x_o$ is less than equal to the radius $R$, i.e., $B(x_o,r)\triangleq{x\in R^n | ||x-x_o||\le R}$.
\subsection{Stability Analysis}
In this section, we prove that the proposed control law (\ref{eq42_1}) guarantees the stability of the leader-following formation control problem for the class of heterogeneous, second-order, uncertain, nonlinear multi-agent systems (\ref{eq1}). Before we give the main theorem of the paper, the following lemma is stated, which will be used to prove the theorem.
\begin{lemma}\label{lem5}
Let an auxiliary function $L_i(t) \in R$ be defined as follows:
\begin{equation}\label{eq47_1}
\begin{split}
    L_i(t)  \triangleq & r_i^T \big{(}\bar{N}_{2_i} - \kappa_{d_i}\text{sgn}(\zeta_i)\big{)} +\dot{\zeta}^T_i \tilde{N}_{1_i} \\ &+ (W^2_m +V_m^2+Z^2_m)\zeta^T_i \text{sgn}(\zeta_i).
\end{split}
\end{equation}
Set $\kappa_{d_i} = d_{M_i} + \bar{\epsilon}_{M_i} +\frac{1}{k_2} \max\{ d_{dM_i}+ \bar{\epsilon}_{dM_i} + N_{dM_i}+(W^2_m+V^2_m+Z^2_m),k_2 N_{M_i} \} $. Then, the following inequality holds
\begin{equation}\label{eq48_1}
\begin{split}
   M_i \triangleq & \  \kappa_{d_i}||\zeta_i(0)||_1- \zeta^T_i(0) (\bar{N}_{2_i}(0)+\tilde{N}_{1_i}(0) ) \\& - \int_0^t L_i(s)ds \ge 0.
\end{split}\end{equation}
\end{lemma}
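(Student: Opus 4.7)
The plan is to reduce the claim to two nonnegative expressions: a boundary/pointwise term evaluated at time $t$, and a nonnegative integrand. First I would substitute $r_i = \dot{\zeta}_i + k_2 \zeta_i$ into the definition of $L_i(t)$ and split it as
\begin{equation*}
L_i = \dot{\zeta}_i^T\bar{N}_{2_i} + k_2\zeta_i^T\bar{N}_{2_i} - \kappa_{d_i}\dot{\zeta}_i^T\mathrm{sgn}(\zeta_i) - k_2\kappa_{d_i}\|\zeta_i\|_1 + \dot{\zeta}_i^T\tilde{N}_{1_i} + (W_m^2+V_m^2+Z_m^2)\|\zeta_i\|_1,
\end{equation*}
using $\zeta_i^T\mathrm{sgn}(\zeta_i)=\|\zeta_i\|_1$. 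Then I would integrate from $0$ to $t$, applying integration by parts to the two terms $\int_0^t\dot{\zeta}_i^T\bar{N}_{2_i}\,ds$ and $\int_0^t\dot{\zeta}_i^T\tilde{N}_{1_i}\,ds$, and using the (almost-everywhere) identity $\int_0^t\dot{\zeta}_i^T\mathrm{sgn}(\zeta_i)\,ds = \|\zeta_i(t)\|_1 - \|\zeta_i(0)\|_1$.

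Once these substitutions are made, the boundary terms at $s=0$ cancel exactly with the prefix $\kappa_{d_i}\|\zeta_i(0)\|_1 - \zeta_i^T(0)(\bar{N}_{2_i}(0)+\tilde{N}_{1_i}(0))$ in the definition of $M_i$. What remains can be organized as
\begin{equation*}
M_i = \bigl[\kappa_{d_i}\|\zeta_i(t)\|_1 - \zeta_i^T(t)\bar{N}_{2_i}(t) - \zeta_i^T(t)\tilde{N}_{1_i}(t)\bigr] + \int_0^t \bigl[\zeta_i^T\dot{\bar{N}}_{2_i} + \zeta_i^T\dot{\tilde{N}}_{1_i} - k_2\zeta_i^T\bar{N}_{2_i} + k_2\kappa_{d_i}\|\zeta_i\|_1 - (W_m^2+V_m^2+Z_m^2)\|\zeta_i\|_1\bigr]ds.
\end{equation*}
Both brackets will be shown to be pointwise nonnegative separately.

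For the boundary bracket, I would use the Cauchy-Schwarz-type estimate $|\zeta_i^T x| \le \|\zeta_i\|\,\|x\| \le \|\zeta_i\|_1\,\|x\|$, together with $\|\bar{N}_{2_i}\| \le d_{M_i}+\bar{\epsilon}_{M_i}$ and $\|\tilde{N}_{1_i}\| \le N_{M_i}$, to get a lower bound $\|\zeta_i(t)\|_1[\kappa_{d_i} - (d_{M_i}+\bar{\epsilon}_{M_i}) - N_{M_i}]$, which is nonnegative precisely because the $k_2 N_{M_i}$ branch inside the max in $\kappa_{d_i}$ forces $\kappa_{d_i} \ge d_{M_i}+\bar{\epsilon}_{M_i}+N_{M_i}$. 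For the integrand, the same norm estimate combined with $\|\dot{\bar{N}}_{2_i}\|\le d_{dM_i}+\bar{\epsilon}_{dM_i}$ and $\|\dot{\tilde{N}}_{1_i}\|\le N_{dM_i}$ yields a lower bound proportional to $\|\zeta_i\|_1$ times $k_2(\kappa_{d_i} - d_{M_i} - \bar{\epsilon}_{M_i}) - (d_{dM_i}+\bar{\epsilon}_{dM_i}) - N_{dM_i} - (W_m^2+V_m^2+Z_m^2)$, and the other branch of the max is tailored exactly so that this quantity is nonnegative.

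The routine but delicate obstacle is bookkeeping during the integration by parts: one must keep track of six separate terms so that the boundary evaluations at $s=0$ annihilate the leading $-\zeta_i^T(0)(\bar{N}_{2_i}(0)+\tilde{N}_{1_i}(0))$ contribution, leaving a clean separation into a pointwise and an integral part. The conceptual part — verifying nonnegativity of each part — is then immediate once one observes that $\kappa_{d_i}$ was chosen as the maximum of two quantities so that both inequalities are simultaneously satisfied.
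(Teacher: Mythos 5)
Your proposal is correct and follows essentially the same route as the paper's proof: substitute $r_i=\dot{\zeta}_i+k_2\zeta_i$, integrate $L_i$, apply integration by parts together with the a.e.\ identity $\int_0^t\dot{\zeta}_i^T\mathrm{sgn}(\zeta_i)\,ds=\|\zeta_i(t)\|_1-\|\zeta_i(0)\|_1$, and then invoke the bound $|\zeta_i^Tx|\le\|\zeta_i\|_1\|x\|$ with the two branches of the max in $\kappa_{d_i}$ handling, respectively, the pointwise term at time $t$ and the integrand. Your rearrangement of $M_i$ into two explicitly nonnegative pieces is only a cosmetic repackaging of the paper's upper bound on $\int_0^tL_i(s)\,ds$.
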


\begin{proof}
Similar to \cite{fan2017asymptotic}, from (\ref{eq40}), and by integrating (\ref{eq47_1}) one has
{\small{\begin{equation}\label{eq52_N}
\begin{aligned}
   \int_0^t L_i(s) ds = & \int_0^t  \dot{\zeta_i}^T(s) \big{(} \bar{N}_{2_i}(s) + \tilde{N}_{1_i}(s) \big{)} ds \\ &  %1st line 
   - \kappa_{d_i} \int_0^t  \dot{\zeta_i}^T(s)  \text{sgn}(\zeta_i(s)) \big{)} ds \\& %2nd line
   +k_2 \int_0^t  {\zeta}^T_i(s)  \big{(} \bar{N}_{2_i}(s) - \kappa_{d_i} \text{sgn}(\zeta_i(s))   \\ & % 3rd line
   + \frac{(W^2_m +V_m^2+Z^2_m)}{k_2} \text{sgn}(\zeta_i(s))  \big{)} ds  \\ % 4th line
= & \zeta_i^T(s) (\bar{N}_{2_i}(s)+\tilde{N}_{1_i}(s)) \Big|_0^t +   k_2 \int_0^t  {\zeta}^T_i (s) \big{(} \bar{N}_{2_i}(s)  \\&   % 5th line
- \kappa_{d_i} \text{sgn}(\zeta_i(s)) + \frac{(W^2_m +V_m^2+Z^2_m)}{k_2} \text{sgn}(\zeta_i (s) ) \\ & -\frac{1}{k_2} \dot{\bar{N}}_{2_i}(s) -\frac{1}{k_2}\dot{\tilde{N}}_{1_i}(s) ds\big{)}  -k_{d_i} \|\zeta_i(s)\|_1 \Big|_0^t.
\end{aligned}
\end{equation}}}Using Cauchy–Schwarz inequality, and the definition of $\kappa_{d_i}$, one has
\begin{equation}\label{eq54_N}
\begin{split}
   \int_0^t L_i(s) \le &
 \zeta_i^T(t) (\bar{N}_{2_i}(t)+\tilde{N}_{1_i}(t)) - \zeta_i^T(0) (\bar{N}_{2_i}(0)+\tilde{N}_{1_i}(0)) \\ & % 1st line
%  + k_2 \int_0^t  {\zeta}^T_i (s) \big{(} \bar{N}_{2_i}(s) - \kappa_{d_i} \text{sgn}(\zeta_i(s))  \\&   % 2nd 
%- \frac{W^2_M}{k_2} \text{sgn}(\zeta^T_i (s) ) -\frac{1}{k_2} \dot{\bar{N}}_{2_i}(s) -\frac{1}{k_2}\dot{\tilde{N}}_{1_i}(s) ds\big{)}\\ &
-k_{d_i} \|\zeta_i(t)\|_1 +k_{d_i} \|\zeta_i(0)\|_1
\\ \le &    \kappa_{d_i}||\zeta_i(0)||_1- \zeta^T_i(0) (\bar{N}_{2_i}(0)+\tilde{N}_{1_i}(0) ).
\end{split}
\end{equation}
Therefore, (\ref{eq48_1}) is valid. This completes the proof.\hfill  $\blacksquare$
\end{proof}

%%%%%%%%%%%%%%%%%%%%%%%%%%%%%%%%%%%%%%%%%%%%%%%%%%%%%%%%%%%%%%%%%%%%%%%%%%%%%%%%%%%%%%%%%%%%
%%%%%%%%%%%%%%%%%%%%%%%%%%%%%%%%%%%%%%%%%%%%%%%%%%%%%%%%%%%%%%%%%%%%%%%%%%%%%%%%%%%%%%%%%%%%
%%%%%%%%%%%%%%%%%%%%%%%%%%%%%%%%%%%%%%%%%%%%%%%%%%%%%%%%%%%%%%%%%%%%%%%%%%%%%%%%%%%%%%%%%%%%
\begin{remark}\label{remark_6}
We propose to set neurons numbers of the first hidden-layer and the second hidden-layer to $m_1=3n+4$ and $m_2=2(3n+4)+2$, for each agent. From \cite[Theorem 2.4]{ismailov2014approximation}, one can find that a three-layer NN with $3n+4$ inputs, $3n+4$ neurons in the first hidden-layer, $2(3n+4)+2$ neurons in the second hidden-layer, and the sigmoid activation function, can approximate $N_{1_i}$ with an arbitrary accuracy. 
\end{remark}
%%%%%%%%%%%%%%%%%%%%%%%%%%%%%%%%%%%%%%%%%%%%%%%%%%%%%%%%%%%%%%%%%%%%%%%%%%%%%%%%%%%%%%%%%%%%
%%%%%%%%%%%%%%%%%%%%%%%%%%%%%%%%%%%%%%%%%%%%%%%%%%%%%%%%%%%%%%%%%%%%%%%%%%%%%%%%%%%%%%%%%%%%
%%%%%%%%%%%%%%%%%%%%%%%%%%%%%%%%%%%%%%%%%%%%%%%%%%%%%%%%%%%%%%%%%%%%%%%%%%%%%%%%%%%%%%%%%%%%

\begin{theorem}\label{th1}
Let the multi-agent system (\ref{eq1}) be modeled by a strongly connected  directed graph. For each agent, select the number of the hidden-layer neurons in (\ref{eq8}) as $m_1=3n+4$, $m_2=2(3n+4)+2$, and the NN input as (\ref{eq44_1}). Under Assumptions \ref{ass1}-\ref{ass5}, choose $k_1>\frac{1}{2}$ and 
\begin{equation}\label{eq54_k2}
    k_2 >   \frac{1 +\bar{\sigma}(P(L+B))}{\underline{\sigma}(P(L+B))  }.
\end{equation}
Set the control input as 
\begin{equation}\label{eq_control}
\begin{split}
    u_i(t) =& -(k_3+k_4) \zeta_i(t)+(k_3+k_4) \zeta(0) \\& - \int_0^t \Big{(}\hat{W}_i^T(s) \sigma_1 \big{(} \hat{Z}_i^T(s) \sigma_2 (\hat{V}_i^T(s) \mathcal{X}_i(s))\big{)}  \\& + k_2(k_3+k_4) \zeta_i(s) + \kappa_i(s) \text{sgn}(\zeta_i(s)) \Big{)}ds,
\end{split}\end{equation}
where $k_3$ and $k_4$ satisfy the following conditions
\begin{equation}\label{eq53_k4}
    k_3 > \frac{k_2}{2\underline{\sigma}(P(L+B))}, \   k_4 > \frac{\bar{\sigma}(P(L+B))}{2}.
\end{equation}
Set the time-varying gain $\kappa_i(t)$ as (\ref{eq_kappa_N}), and the neural networks weights matrices tuning laws as
{\small{\begin{equation}\label{eq_tune0}
\begin{split}
    \dot{\hat{W}}_i & = \alpha_i \Big{(} k_2\big{(} \hat{\sigma}_1 - \hat{\sigma}_1' \hat{Z}_i^T {\hat{\sigma}_2}-  \hat{\sigma}_1' \hat{Z}_i^T  \hat{\sigma}_2' \hat{V}_i^T{\mathcal{X}_i} \big{)} \zeta_i^T -  \|\zeta_i\|_1 {\hat{W}}_i  \Big{)} ,  \\  % line 1
\dot{\hat{Z}}_i & =  \beta_i \Big{(} k_2\big{(} \hat{\sigma}_2 -  \hat{\sigma}_2' \hat{V}_i^T \mathcal{X}_i \big{)} \zeta_i^T \hat{W}_i ^T  \hat{\sigma}_1' - \|\zeta_i\|_1 {\hat{Z}}_i  \Big{)} , \\ % line 2
\dot{\hat{V}}_i &=  \gamma_i \big{(} k_2\mathcal{X}_i  \zeta_i^T \hat{W}_i ^T  \hat{\sigma}_1' \hat{Z}_i ^T \hat{\sigma}_2' - \|\zeta_i\|_1 {\hat{V}}_i  \big{)}.
\end{split}\end{equation}}}\noindent
 Then, all the closed-loop signals are bounded, $\zeta \rightarrow 0$ as $t\rightarrow\infty$, and agents achieve the desired formation and maintain the leader's velocity.  
\end{theorem}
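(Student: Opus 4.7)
The plan is to prove the theorem via a Lyapunov argument built around the RISE-feedback framework, adapted to accommodate the three-layer NN and its tuning laws. First I would define a composite Lyapunov candidate of the form
\begin{equation*}
V(t) = \tfrac{1}{2}\vec{e}^{T}\vec{e} + \tfrac{1}{2}\zeta^{T} P(L+B)\zeta + \tfrac{1}{2} r^{T}\mathcal{G}(\vec{x})Q r + \tfrac{1}{2k_2}\operatorname{tr}\!\bigl(\tilde{W}^{T}\alpha^{-1}\tilde{W} + \tilde{Z}^{T}\beta^{-1}\tilde{Z} + \tilde{V}^{T}\gamma^{-1}\tilde{V}\bigr) + \sum_{i=1}^{N} M_i(t) + \tfrac{1}{2}\sum_{i=1}^{N}\tilde{\kappa}_i^{2},
\end{equation*}
where $\tilde{\kappa}_i = \kappa_i - \kappa_{d_i}$. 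Positivity of the $M_i$ term is already delivered by Lemma \ref{lem5}; positivity of the $r^{T}\mathcal{G} Q r$ term follows from Assumption \ref{ass2} together with Lemma \ref{lem_l+b}, and the remaining quadratic blocks are manifestly nonnegative. Upper and lower bounding $V$ by $W_1(\chi)$ and $W_2(\chi)$ for a stacked variable $\chi$ containing $\vec{e},\zeta,r,\tilde{W},\tilde{Z},\tilde{V},\sqrt{M_i},\tilde{\kappa}_i$ sets the stage for invoking Lemma \ref{khalil}.

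Next I would differentiate $V$ along the closed-loop trajectories. The $\vec{e}^{T}\dot{\vec{e}}$ term produces $\vec{e}^{T}(\zeta-k_1\vec{e})$ via (\ref{eq33}); the $\zeta^{T} P(L+B)\dot{\zeta}$ term combined with (\ref{eq40}) gives a cross term $\zeta^{T} P(L+B)(r-k_2\zeta)$; the $r^{T}\mathcal{G} Q \dot{r}$ contribution, using (\ref{eq43_1}) and the cancellation $\tfrac{1}{2}r^{T}\dot{\mathcal{G}} Q r$, reduces to $-k_2 r^{T}\zeta - (k_3+k_4)r^{T}r + r^{T}\tilde{N}_1 + r^{T}(N_2+\bar{\epsilon}) - r^{T}\kappa\,\mathrm{sgn}(\zeta)$. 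The time derivative of the $\sum M_i$ term produces exactly $-\sum L_i$, which by Lemma \ref{lem5} cancels $r^{T}\tilde{N}_1$, the RISE robustifying contribution, and the residual $\bar{N}_2$ term, while leaving a residual $-\sum(W_m^2+V_m^2+Z_m^2)\zeta_i^{T}\mathrm{sgn}(\zeta_i) = -(W_m^2+V_m^2+Z_m^2)\|\zeta\|_1$. The weight-error trace derivatives, after substituting the tuning laws (\ref{eq_tune0}), produce the estimation-error decomposition of Lemma \ref{lem2}, multiplied by $\zeta$ (rather than $r$), plus sigma-modification terms $\|\zeta_i\|_1\operatorname{tr}(\tilde{W}_i^{T}\hat{W}_i)$ and its $\tilde{Z},\tilde{V}$ analogues. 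Using the algebraic identity $\operatorname{tr}(\tilde{X}^{T}\hat{X}) = \operatorname{tr}(\tilde{X}^{T}X)-\|\tilde{X}\|_F^{2}$ together with Young's inequality and Assumption \ref{ass4}, these sigma-modification terms are dominated by $-\tfrac{1}{2}\|\tilde{X}\|_F^{2}\|\zeta_i\|_1 + \tfrac{1}{2}X_m^{2}\|\zeta_i\|_1$, and the remaining $X_m^{2}\|\zeta\|_1$ contributions are precisely what the $(W_m^2+V_m^2+Z_m^2)\|\zeta\|_1$ term from $\sum L_i$ was designed to cancel. Finally, the $\tilde{\kappa}_i\dot{\tilde{\kappa}}_i$ contribution using (\ref{eq_kappa}) cancels the cross term $r_i^{T}\kappa_{d_i}\mathrm{sgn}(\zeta_i)$ that was produced when we split $\kappa = \tilde{\kappa} + \kappa_d$.

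With all sign-indefinite contributions canceled, I would be left with
\begin{equation*}
\dot{V} \le -k_1 \|\vec{e}\|^{2} + \vec{e}^{T}\zeta - k_2\,\zeta^{T} P(L+B)\zeta + \zeta^{T} P(L+B)r - (k_3+k_4)\|r\|^{2} - k_2 r^{T}\zeta - \tfrac{1}{2}\sum_i\bigl(\|\tilde{W}_i\|_F^{2}+\|\tilde{Z}_i\|_F^{2}+\|\tilde{V}_i\|_F^{2}\bigr)\|\zeta_i\|_1,
\end{equation*}
and the cross terms $\vec{e}^{T}\zeta$ and $\zeta^{T}(P(L+B)-k_2 I)r$ are absorbed by Young's inequality against $-k_1\|\vec{e}\|^{2}$, $-k_2\underline{\sigma}(P(L+B))\|\zeta\|^{2}$, and $-(k_3+k_4)\|r\|^{2}$. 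The gain conditions $k_1 > 1/2$, (\ref{eq54_k2}), and (\ref{eq53_k4}) are precisely those that make the resulting quadratic form strictly negative in $\vec{e},\zeta,r$, yielding $\dot{V} \le -c(\|\vec{e}\|^{2} + \|\zeta\|^{2} + \|r\|^{2})$ for some $c>0$. This identifies $W_3(\chi)$ as a positive semi-definite function of the tracking variables.

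Applying Lemma \ref{khalil} then yields semi-global boundedness of all closed-loop signals inside a ball determined by the initial conditions (the compact sets of Remark \ref{rem4} justify the ``semi-global'' qualifier), together with $\vec{e}(t),\zeta(t),r(t)\to 0$ as $t\to\infty$. From $\zeta = \vec{\delta} + k_1\vec{e}\to 0$ and $\vec{e}\to 0$ we deduce $\vec{\delta}\to 0$; inverting the non-singular M-matrix $L+B$ via Lemma \ref{lem_l+b} and (\ref{eq31}) then converts these into $p_i - p_l - d_i \to 0$ and $v_i - v_l \to 0$, which is the desired formation tracking (\ref{eq6}). The main obstacles I expect are bookkeeping: ensuring the $\sum L_i$ term from the Lyapunov derivative cancels exactly the $r^{T}\tilde{N}_1$, $r^{T}\bar{N}_2$, and $(W_m^2+V_m^2+Z_m^2)\|\zeta\|_1$ contributions simultaneously (this is where the specific choice of $\kappa_{d_i}$ in Lemma \ref{lem5} is crucial), and confirming that the sigma-modification terms do not spoil the negative-definiteness once the quadratic form in $\vec{e},\zeta,r$ is closed. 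The gain conditions (\ref{eq54_k2}) and (\ref{eq53_k4}) arise cleanly from completing the square on the Young's inequality bounds.
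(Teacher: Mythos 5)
Your proposal follows essentially the same route as the paper's proof: the same composite Lyapunov candidate (note $\tfrac12\zeta^T P(L+B)\zeta=\tfrac14\zeta^T\Pi\zeta$), the same use of Lemma \ref{lem5} to absorb $r^T\tilde N_1$ and $r^T\bar N_2$, the same cancellation of the $k_2\zeta^T\tilde N_1$ residue by the tuning laws and of the $\sigma$-modification leftovers by the $(W_m^2+V_m^2+Z_m^2)\|\zeta\|_1$ term, the same Young's-inequality closure giving the gain conditions, and the same invocation of Lemma \ref{khalil}. Two small points to fix when writing it out: since the tuning laws (\ref{eq_tune0}) already carry the factor $k_2$, the trace block in $V$ must be weighted $\tfrac12\operatorname{tr}(\tilde W^T\alpha^{-1}\tilde W)+\cdots$ rather than $\tfrac1{2k_2}(\cdots)$ for the cancellations to balance; and because of the $\mathrm{sgn}(\zeta)$ terms the closed loop has a discontinuous right-hand side, so the differentiation of $V$ must be justified in the Filippov/differential-inclusion sense, as the paper does before applying Lemma \ref{khalil}.
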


\begin{proof}
Let us define the Lyapunov function candidate as
\begin{equation}\label{eq55}
\begin{split}
    V=&  \frac{1}{2}\vec{e}^T\vec{e} +  \frac{1}{4}\zeta^T \Pi \zeta + \frac{1}{2}r^T \mathcal{G}Q r +M +\frac{1}{2}\tilde{\kappa}^T\tilde{\kappa}+ \bar{y},
\end{split}\end{equation}
where semi-positive term $M \triangleq \sum_{i=1}^N M_i$. The variable $M_i$ is defined using Lemma \ref{lem5} and  $\bar{y}$ is given by
\begin{equation}\label{eq56_1}
\begin{split}
\bar{y} \triangleq & \frac{1}{2} tr(\tilde{W}^T \alpha^{-1} \tilde{W})+\frac{1}{2}tr(\tilde{Z}^T \beta^{-1} \tilde{Z}) +\frac{1}{2}tr(\tilde{V}^T \gamma^{-1} \tilde{V})   % 1st
\\ = & \frac{1}{2} \text{vec}({\alpha^{\frac{-1}{2}} \tilde{W}})^T\text{vec}({\alpha^{\frac{-1}{2}} \tilde{W}})+\frac{1}{2} \text{vec}({\beta^{\frac{-1}{2}} \tilde{Z}})^T\text{vec}({\beta^{\frac{-1}{2}} \tilde{Z}})\\& +\frac{1}{2} \text{vec}({\gamma^{\frac{-1}{2}} \tilde{V}})^T\text{vec}({\gamma^{\frac{-1}{2}} \tilde{V}}).
\end{split}
\end{equation}
%where $\alpha^{\frac{-1}{2}} = diag(\sqat{\alpha^{-1}_1},...,\sqat{\alpha^{-1}_{Nn}})$, $\beta^{\frac{-1}{2}} = diag(\sqat{\beta^{-1}_1},...,\sqat{\beta^{-1}_{Nn}})$ and  $\beta^{\frac{-1}{2}} = diag(\sqat{\gamma^{-1}_1},...,\sqat{\gamma^{-1}_{Nn}})$. 

Let us define $\bar{\kappa} = [\kappa_1,...,\kappa_N]^T \in \mathbb{R}^N$, ${\kappa}_d = [{\kappa}_{d_1},...,\hat{\kappa}_{d_N}]^T \in \mathbb{R}^N$ and $\tilde{\kappa}=[\tilde{\kappa}_1,...,\tilde{\kappa}_N] \in \mathbb{R}^N$, where $\tilde{\kappa}_i \triangleq \kappa_i -\kappa_{d_i}$.
Moreover, let $\chi= [\vec{e}^T,\zeta^T,r^T,\sqrt{M},\text{vec}(\alpha^{\frac{-1}{2}}\hat{W})^T,\text{vec}(\beta^{\frac{-1}{2}}\hat{Z})^T$, $\text{vec}(\gamma^{\frac{-1}{2}}\hat{V})^T,\kappa^T]^T $. It can be seen that the following inequalities are valid for (\ref{eq55})
\begin{equation}\label{eq55_1}
  L_b(\chi) \le V  \le U_b (\chi),
\end{equation}
with  
\begin{equation}
     L_b(\chi) =  \underline{\eta} ||\chi||^2, \
     U_b(\chi) =  \bar{\eta} ||\chi||^2,
\end{equation}
where $\underline{\eta} \triangleq \ \text{min} \{ \frac{1}{2} \underline{g}\underline{\sigma}(Q),\frac{1}{2},\frac{1}{4}\underline{\sigma}(\Pi)\}$, $\bar{\eta}\triangleq \text{max} \{\frac{1}{2} \bar{g}\bar{\sigma}(Q) ,1,\frac{1}{4}\bar{\sigma}(\Pi) \}$. To use Lemma \ref{khalil}, we should study the existence of the Filippov's solution for $\dot{\chi} = f(\chi,t)$, as $\dot{M}$ and $\dot{\kappa}$ are differential
equations with discontinuous right-hand side \cite{de2019formation}. Using differential inclusion \cite{filippov2013differential,aubin2012differential}, an absolutely continuous Filipov's solution exists for 
$\dot{\chi} \in {K} {[\mathcal{F}}] ({\chi},t)$
where an upper semi-continuous, compact and convex set-valued map ${K} {[\mathcal{F}}] ({\chi},t) $ is defined as follows:
\begin{equation}\label{eq56}
{K} {[\mathcal{F}}] ({\chi},t) \triangleq \underset{R>0}{\cap} \underset{\mu (H)=0}{\cap} \overline{co} f(B(\chi,R) \backslash H ,t),
\end{equation}
where $B(\chi,R)$ is the closed ball with center $\chi$ and the radius $R$. Using (\ref{eq56}) and from \cite[Theorem 2.2]{shevitz1994lyapunov}, the time-derivative of Lyapunov function candidate exists almost everywhere:
\begin{equation}\label{eq57}
\dot{V}(\chi)   \stackrel{a.e.}{\in} \underset{\Lambda \in\partial V(\chi) }{\cap} \Lambda^T [{K}^T {[\mathcal{F}}] ({\chi}) ]^T.
\end{equation}
As the Lyapunov function candidate (\ref{eq55}) is continuously smooth, (\ref{eq57}) can be written as follows:
\begin{equation}\label{eq60}
{\small{\begin{aligned}
\dot{V}(\chi)   \stackrel{a.e.}{\in}   \triangledown  V \ {K}[& \dot{\vec{e}}^T,\dot{\zeta}^T, \dot{r}^T, \dot{\mathcal{G}}^T, \dot{M}, \text{vec}{(\alpha^{-\frac{1}{2}}\dot{\hat{W}})}^T, \\ & \text{vec}{(\beta^{-\frac{1}{2}}\dot{\hat{Z}})}^T, \text{vec}{(\gamma^{-\frac{1}{2}}\dot{\hat{V}})}^T, \dot{\tilde{\kappa}}^T ]^T,
\end{aligned}}}\end{equation}
where $\triangledown V \triangleq [\vec{e}^T,\frac{1}{2}\zeta^T \Pi,r^T\mathcal{G}Q, \frac{1}{2}\frac{\partial(r^T\mathcal{G}Qr)}{ \partial \mathcal{G}}, 1, \text{vec}(\alpha^{-\frac{1}{2}}\hat{W})^T,$ $\text{vec}(\beta^{-\frac{1}{2}}\hat{Z})^T,\text{vec}(\gamma^{-\frac{1}{2}}\hat{V})^T,\tilde{\kappa}^T]$. 
Using (\ref{eq56_1}) and (\ref{eq60}), the time-derivative of (\ref{eq55}) is given by
\begin{equation}\label{eq62}
\begin{split}
    \dot{V}  \stackrel{a.e.}{\in}    &  \vec{e}^T\dot{\vec{e}} + \frac{1}{2} \zeta^T \Pi \dot{\zeta} + \frac{1}{2}r^T \dot{\mathcal{G}}Q r+ r^T {\mathcal{G}}Q \dot{r} + \dot{M} +\tilde{\kappa}^T\dot{\tilde{\kappa}} \\ & %1dt
    + tr(\tilde{W}^T \alpha^{-1} \dot{\tilde{W}})+tr(\tilde{Z}^T \beta^{-1} \dot{\tilde{Z}})+tr(\tilde{V}^T \gamma^{-1} \dot{\tilde{V}}).  %2nd
\end{split}\end{equation}
Substituting (\ref{eq33}), (\ref{eq40}), (\ref{eq43_1}), and (\ref{eq48_1}) in (\ref{eq62}) yields
\begin{equation}\label{eq63}
{\small{\begin{aligned}
    \dot{V}  =  &  \vec{e}^T (\zeta- k_1 \vec{e}) +  \frac{1}{2} \zeta^T \Pi (r-k_2 \zeta) +  r^T \Big{(}- k_2\zeta  -(k_3+k_4) {r} \\&  %1st line
    - \kappa(t)\text{Sgn}(\zeta)  + \tilde{W} ^T \big{(} \hat{\sigma}_1 - \hat{\sigma}_1' \hat{Z}^T {\hat{\sigma}_2}- \hat{\sigma}_1' \hat{Z}^T  \hat{\sigma}_2' \hat{V}^T{\mathcal{X}} \big{)}  \\  %n2d line
& + \hat{W} ^T  \hat{\sigma}_1' \tilde{Z} ^T \big{(} \hat{\sigma}_2 -  \hat{\sigma}_2' \hat{V}^T \mathcal{X} \big{)} +\hat{W} ^T \big{(}  \hat{\sigma}_1' \hat{Z} ^T \hat{\sigma}_2' \tilde{V}^T \mathcal{X} \big{)} \\ %3rd line
 & 
+N_2 + \bar{\epsilon}_M \Big{)}   +\tilde{\kappa}^T\dot{\tilde{\kappa}} -\sum_{i=1}^N \bigg{(}r_i^T \big{(}\bar{N}_{2_i} - \kappa_{d_i}\text{Sgn}(\zeta_i)\big{)} \\ & %4th
+\dot{\zeta}^T_i \tilde{N}_{1_i} + (W^2_m +V_m^2+Z^2_m) \zeta^T_i \text{Sgn}(\zeta_i) \bigg{)} \\ & % 5th
    + tr(\tilde{W}^T \alpha^{-1} \dot{\tilde{W}})+tr(\tilde{Z}^T \beta^{-1} \dot{\tilde{Z}})+tr(\tilde{V}^T \gamma^{-1} \dot{\tilde{V}}),  %5th
\end{aligned}}}\end{equation}
where by abusing notation, $\text{Sgn}(\zeta_i)=[\text{Sgn}(\zeta_{i1}), ..., \text{Sgn}(\zeta_{in})]^T$. The definition of $\text{Sgn}(x_i)$, for $x_i \in R$ is given as in \cite{aubin2012differential} % ch1 =>subsection 12=> p84
\begin{equation}
\text{Sgn}(x_i) =
\begin{cases}
-1 & \text{if } x_i < 0,\\
[-1, 1] & \text{if } x_i = 0,\\
1 & \text{if } x_i > 0.
\end{cases}
\end{equation}
Reorganizing (\ref{eq63}), from (\ref{eq40}) and (\ref{eq_46_N}), one has
\begin{equation}\label{eq64}
{\small{\begin{aligned}
    \dot{V}  \le  &  \vec{e}^T (\zeta- k_1 \vec{e}) + \frac{1}{2} \zeta^T \Pi (r-k_2 \zeta) +  r^T \big{(}- k_2\zeta \\&  %1st line
    -(k_3+k_4) {r}\big{)}  +\tilde{\kappa}^T\dot{\tilde{\kappa}} + \sum_{i=1}^N (\kappa_{d_i}- \kappa_i(t))r_i^T \text{Sgn}(\zeta_i) \\ & %2
       - (W^2_m +V_m^2+Z^2_m) \zeta^T_i \text{Sgn}(\zeta_i)  \\ &  %3 
    +\sum_{i=1}^N\Big{(} tr  (\tilde{W}_i^T   \big{(} \alpha_i^{-1}\dot{\tilde{W}}_i+ k_2 \big{(}\hat{\sigma}_1  - \hat{\sigma}_1' \hat{Z}_i^T {\hat{\sigma}_2}\\  %4
    &- \hat{\sigma}_1' \hat{Z}_i^T  \hat{\sigma}_2' \hat{V}_i^T{\mathcal{X}_i} \big{)} \zeta_i^T)\big{)}+  tr\big{(}\tilde{Z}_i ^T (\beta^{-1} \dot{\tilde{Z}}_i \\ &  %5
    + k_2 \big{(} \hat{\sigma}_2 -  \hat{\sigma}_2' \hat{V}_i^T \mathcal{X}_i \big{)} \zeta_i^T\hat{W}_i ^T  \hat{\sigma}_1' ) \big{)}   \\&  %6
    +tr\big{(}\tilde{V}_i^T (\gamma^{-1}_i \dot{\tilde{V}}_i+  k_2\mathcal{X}_i \zeta_i^T\hat{W}_i ^T  \hat{\sigma}_1' \hat{Z}_i ^T \hat{\sigma}_2' ) \Big{)}.  %7
\end{aligned}}}\end{equation}
By substituting (\ref{eq_kappa}) and (\ref{eq_tune0}) in (\ref{eq64}), one can get
\begin{equation}\label{eq65_old}
{\small{\begin{aligned}
    \dot{V}  \le  &  \vec{e}^T (\zeta- k_1 \vec{e}) + \frac{1}{2}  \zeta^T \Pi (r-k_2 \zeta) +  r^T (- k_2\zeta  -(k_3+k_4) {r} ) \\&  %1st line
  + \sum_{i=1}^N \bigg{(} \|\zeta_i\|_1 \Big{(} tr(\tilde{W}_i^T \hat{W}_i)  + tr(\tilde{Z}_i^T \hat{Z}_i) + tr(\tilde{V}_i^T \hat{V}_i) \Big{)} \\ &
  - (W^2_m +V_m^2+Z^2_m) \zeta^T_i \text{Sgn}(\zeta_i) \bigg{)} . 
\end{aligned}}}\end{equation}
Applying Young’s inequality for $tr(\tilde{W}_i^T \hat{W}_i)$, one has $tr(\tilde{W}_i^T \hat{W}_i)\le \frac{W_m^2}{2} - \frac{\|\hat{W}_i\|_F^2}{2}$. Following the same procedure for $tr(\tilde{Z}_i^T \hat{Z}_i)$ and $tr(\tilde{V}_i^T \hat{V}_i)$, one can write
\begin{equation}\label{eq65}
{\small{\begin{aligned}
    \dot{V}  \le  &  \vec{e}^T (\zeta- k_1 \vec{e}) + \frac{1}{2}  \zeta^T \Pi (r-k_2 \zeta) +  r^T (- k_2\zeta  -(k_3+k_4) {r} ) \\&  %1st line
  + \sum_{i=1}^N \big{(}\tilde{\kappa}_i^T\dot{\kappa}_i  - \tilde{\kappa}^T_{i} r^T_i  \text{Sgn}(\zeta_i) \big{)} . 
\end{aligned}}}\end{equation}
From (\ref{eq_2}) and applying Cauchy–Schwarz inequality and Young’s inequality, the following result can be obtained 
\begin{equation}\label{eq_cauchy}
\begin{split}
    -k_2 r^T  \zeta \le & \frac{k_2}{2} \underline{\sigma}(P(L+B))||\zeta||^2 + k_2 \frac{||r||^2}{2\underline{\sigma}(P(L+B))} .
\end{split}\end{equation}
Substituting (\ref{eq_cauchy}) in (\ref{eq65}) and from (\ref{eq_2}) one can write 
\begin{equation}\label{eq66}
\begin{split}
    \dot{V}  \le  &   - (k_1-\frac{1}{2}) ||\vec{e}||^2 - \big{(}\frac{k_2}{2} \underline{\sigma}(P(L+B)) \\ & %1st
    - \frac{1 +\bar{\sigma}(P(L+B))}{2} \big{)} ||\zeta||^2     -(k_3+k_4  \\ & -\frac{\bar{\sigma}(P(L+B))}{2} - \frac{k_2}{2\underline{\sigma}(P(L+B))} ) ||r||^2 ). %1st line
\end{split}\end{equation}
From the given conditions for gains $k_1,k_2,k_3,k_4$ in the Theorem \ref{th1}, one has
\begin{equation}\label{eq67}
\begin{split}
    \dot{V}  \le  &   - \lambda_1 ||\vec{e}||^2  - \lambda_2 ||\zeta||^2     - \lambda_3 ||r||^2 , 
\end{split}\end{equation}
where $\lambda_1$, $\lambda_2$, and $\lambda_3$ are positive constants. Consequently, (\ref{eq67}) can be rewritten as 
\begin{equation}\label{eq70}
\begin{split}
    \dot{V}  \le  &   -U(\chi),
\end{split}\end{equation}
where $ U(\chi) \triangleq - \underline{\lambda} ||[\vec{e}^T,\zeta^T,r^T]^T ||^2$, with  $\underline{\lambda} = \text{min} (\lambda_1,\lambda_2,\lambda_3)$. Note that in (\ref{eq70}),  $U(\chi)$ is a positive semi-definite function over $\Omega_\mathcal{X}$. From (\ref{eq55}) and (\ref{eq70}), one has $V \in L_\infty$ over $\Omega_\mathcal{X}$; hence, $\vec{e},\zeta,r,\tilde{W},\tilde{Z},\tilde{V}$ and $\tilde{\kappa}$ are bounded on set $\Omega_\mathcal{X}$. Consequently, from (\ref{eq33}) $\delta \in L_\infty$. From Assumption \ref{ass1} and (\ref{eq31}), one can conclude that $\vec{x} \in L_\infty$. From (\ref{eq40}), as $r \in L_\infty$ and $\zeta \in L_\infty$, one can obtain $\dot{\zeta} \in L_\infty$. Using Lemma \ref{lem_l+b}, Assumptions \ref{ass1}-\ref{ass3}, and (\ref{eq34}), it can be deduced that $u \in L_\infty$. Based on Assumption \ref{ass4}, (\ref{eq42_1}) and (\ref{eq43_1}) are also bounded. From the boundedness of $\vec{e},\delta, \zeta,r$ and the closed-loop terms,  $U(\chi)$ is uniformly continuous, where Lemma \ref{khalil} can be applied. Let $\Omega_\mathcal{X}=\{\chi |\ ||\chi|| \in B(0,R)\}$, and the initial conditions belong to the compact set  $\mathcal{S} \subset \Omega_\mathcal{X}$ as
\begin{equation}
    \mathcal{S}\triangleq \{\chi \in \Omega_\mathcal{X} | U_b(\chi) \le \underline{\eta} R^2  \}, 
\end{equation}
where the origin is within the set of $\mathcal{S}$. Then, the Lemma \ref{khalil} is used to conclude $||\vec{e}||^2\rightarrow0, ||\zeta||^2 \rightarrow0, ||r||^2 \rightarrow0, \text{as} \  t\rightarrow \infty, \ \forall \chi(0) \in  \mathcal{S}$. From (\ref{eq31}) and (\ref{eq33}), one can conclude that as $||\vec{e}|| \rightarrow 0$, and  $||\delta||\rightarrow 0$, (\ref{eq6}) holds. Consequently, by increasing $R$, the semi-global asymptotic stability is obtained \cite{xian2004continuous}. This completes the proof. \hfill $\blacksquare$
\end{proof}

%%%%%%%%%%%%%%%%%%%%%%%%%%%%%%%%%%%%%%%%%%%%%%%%%%%%%%%%%%%%%%%%%%%%%%%%%%%%%%%%%%%%%%%%%%%%
%%%%%%%%%%%%%%%%%%%%%%%%%%%%%%%%%%%%%%%%%%%%%%%%%%%%%%%%%%%%%%%%%%%%%%%%%%%%%%%%%%%%%%%%%%%%
%%%%%%%%%%%%%%%%%%%%%%%%%%%%%%%%%%%%%%%%%%%%%%%%%%%%%%%%%%%%%%%%%%%%%%%%%%%%%%%%%%%%%%%%%%%%
%%%%%%%%%%%%%%%%%%%%%%%%%%%%%%%%%%%%%%%%%%%%%%%%%%%%%%%%%%%%%%%%%%%%%%%%%%%%%%%%%%%%%%%%%%%%

\section{Simulation Results}
In order to study the performance of the proposed method, we consider a multi-agent system with five two-link robot arms. The directed graph which models the system is shown in Fig. \ref{fig3}. The  dynamics of each agent is given by \cite{IET}
\begin{equation}\label{ex4_}
\begin{array}{lr}
\dot{p}_{i1}=p_{i2}, \\
M_i(p_{i1})\dot{p}_{i2}+V_i(p_{i1},p_{i2})p_{i2}+G_i(p_{i1})+w_{i}=\tau_i,
\end{array}
\end{equation}
for $i\in\{1,...,5\}$, where $p_{i1}=[p_{i11},p_{i12}]^T \in \mathbb{R}^2$ and $p_{i2}=[p_{i21},p_{i22}]^T\in \mathbb{R}^2$ are the joint position and velocity states of $i$-th robot arm, respectively. In (\ref{ex4_}), the control input is $\tau_i\triangleq u_i$. Expressions for the inertia matrix $M_i(p_{i1})$, the Coriolis matrix  $V_i(p_{i1},p_{i2})$, and the gravitational vector $G_i(p_{i1})$ in (\ref{ex4_}) can be found in \cite{IEEE_TAES,IET}.
In (\ref{ex4_}), $w_i$ represents a bounded disturbance in the dynamics of each two-link robot arm. The parameters of the each robot arm are $g=9.8m/s^2$, $r_{i1}=1m$, $r_{i1}=1m$, $m_{i1}=0.8kg$, and $m_{i2}=1.7kg$, $w_i=[-0.12cos(t),0.1sin(t)]^T $, $i\in\{1,...,5\}$. The leader dynamics is given by
\begin{equation}\label{leader}
\begin{array}{lr}
\dot{p}_l=v_l, \\
\dot{v}_{l}= \begin{bmatrix}
-p_{l_1}+0.2(1-p_{l_1}^2)v_{l_1} \\
-p_{l_2}+0.3(1-p_{l_2}^2)v_{l_2}
\end{bmatrix},
\end{array}
\end{equation}
where $p_{l}=[p_{l_1},p_{l_2}]^T \in \mathbb{R}^2$ and $v_{l}=[v_{l_1},v_{l_2}]^T \in \mathbb{R}^2$ are the position and velocity states of the leader. The initial conditions for the five robots are given as: $p_{11}(0)=[2.1,0]^T$, $p_{21}(0)=[0,2.5]^T$,  $p_{31}(0)=[-1.1,2]^T$, $p_{41}(0)=[-1.8,0.7]^T$, $p_{51}(0)=[-1,-1.7]^T$, $p_{i2}(0)=[0,0]^T, \forall i \in\{1,...,5\}$. The desired displacement with respect to the leader for each agent is given as: $d_1= [0,d]^T$, $d_2= [-d sin(\frac{2\pi}{5}),d cos(\frac{2\pi}{5})]^T$, $d_3= [-d sin(\frac{\pi}{5}),-d cos(\frac{\pi}{5})]^T$, $d_4= [ d sin(\frac{\pi}{5}),-d cos(\frac{\pi}{5})]^T$ and $d_5= [d sin(\frac{2\pi}{5}),d cos(\frac{2\pi}{5})]^T$ with $d=1$.
The leader initial states are $p_l(0)=[1,-1]^T$ and $v_l(0)=[0,0]^T$, respectively. All weights of edges in Fig. \ref{fig3} are equal to $one$, as well as the leader's edge weight ($b_1=1$). Therefore, the adjacency matrix and graph Laplacian are given by
\begin{equation}\label{78}
\small{A=
\begin{bmatrix}
0&0&0&0&1   \\
1&0&0&0&0\\
1&1&0&0&0\\
0&0&1&0&1\\
0&0&1&0&0\\
\end{bmatrix},
L=
\begin{bmatrix}
1&0&0&0&-1   \\
-1&1&0&0&0\\
-1&-1&2&0&0\\
0&0&-1&2&-1\\
0&0&-1&0&1\\
\end{bmatrix}.}
\end{equation}

We selected $10$ neurons for the input layer, $10$ neurons for the first hidden layer, $22$ neurons for the second hidden layer, and $2$ neurons for the output layer for each agent. The activation function for these layers is a sigmoid function. The activation function for the output layer is linear. The constants $\alpha_i$, $\beta_i$, $\gamma_i$, $\forall i$, are chosen as $\alpha_i= \frac{1}{20}$, $\beta_i= \frac{1}{20}$, $\gamma_i= \frac{1}{20}$. The gains were selected as: $k_1=4$, $k_2=37.5$, $k_3=380$, and $k_4=2$.

Figs. \ref{fig4}-\ref{fig10} show the simulation results for the multi-agent system (\ref{ex4_}). In Fig. \ref{fig4}, the performance of the multi-agent system is shown where dash-dotted lines indicate the multi-agent system achieving its desired formation. Fig. \ref{fig5} illustrates trajectories of agents and the leader. The velocities of agents, as well as the leader, are shown as Fig. \ref{fig6}, where velocities of agents become identical to the leader velocity. The Frobenius norm of the NN weights matrices of $\hat{V}_i$, $\hat{Z}_i$ and $\hat{W}_i$ are displayed as Fig. \ref{fig7}. Fig. \ref{fig8} demonstrates the time-varying gains $\kappa_i$. To avoid that $\kappa_i(t)$ reaches high values, a dead-zone with size of $b=0.005$, has been implemented, similar to \cite{yang2015robust,fan2017asymptotic}. The error signals ${e}_i(t)$, their time-derivatives $\delta_i(t)$, and the filtered errors $\zeta_i(t)$ are shown in Fig. \ref{fig9}. The control input of each agent is shown in Fig. \ref{fig10}.

These results verify the performance of the proposed method based on RISE feedback and the NN-based controller. To compare our proposed method with some other existing works, we considered results on leader-following consensus control of the nonlinear system with unknown nonlinearity \cite{lewis,IET}. We included the desired displacements and applied these methods similar to \cite[p.~127]{mesbahi2010graph}. We define the average of the control inputs cost function (ACI), $\nu(t)$ and average of the formation errors cost function (AFE), $\vartheta(t)$, as follows:
\begin{equation}
\begin{split}
    \nu(t) =& \frac{1}{2N} \sum_{i=1}^N u_i^T(t) u_i(t), \\
    \vartheta(t) =& \frac{1}{2N} \sum_{i=1}^N ||{e}_i(t)||_1.
\end{split}\end{equation}

We use these two functions as performance indices to evaluate and compare the effectiveness of the proposed methods in comparison with \cite{lewis,IET}. Fig. \ref{fig11}(a), shows the semi-logarithmic graph of ACI function. 
%As it can be seen from Fig. \ref{fig11}(a) that our proposed method requires less inrush current in comparison with two other methods. => ( it is not correct !)
However, the proposed method (solid-line) energy consumption is almost the same as the other two methods. Fig. \ref{fig11}(b) indicates that our proposed method settled faster and converged to zeros in contrast with two other methods where the error remains bounded.
\begin{figure}[t]
    \centering
    \includegraphics[scale=1]{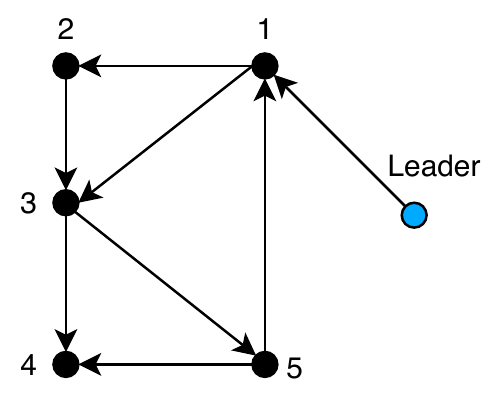}    
  \caption{The directed graph of five agents with the leader as the root of the spanning tree.}
   \label{fig3}
\end{figure}  
\begin{figure}[t]
    \centering
    \includegraphics[scale=.45]{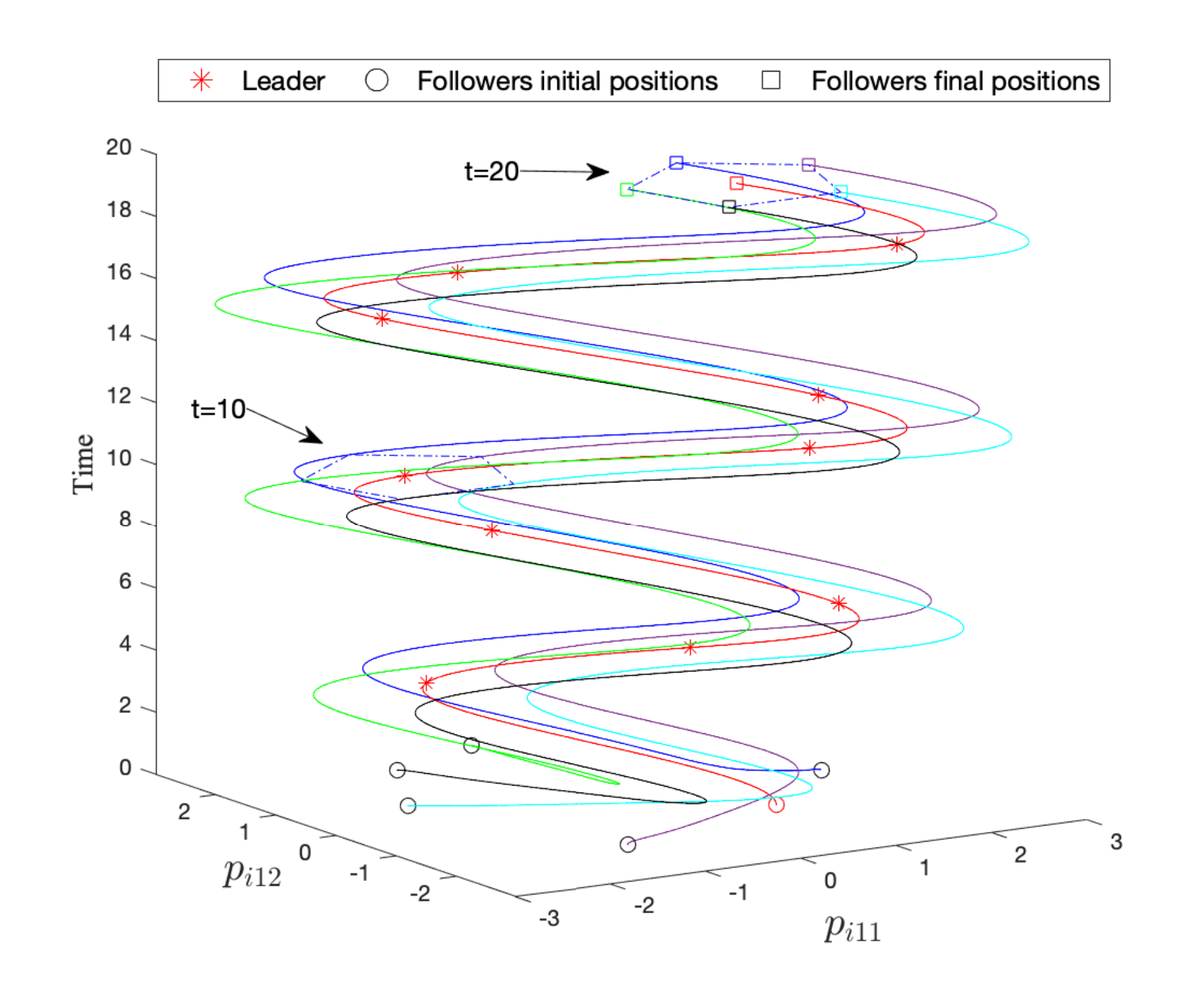}    
  \caption{Performance of the multi-agent system (\ref{ex4_}).}
   \label{fig4}
\end{figure}  
\begin{figure}[th]
    \centering
    \includegraphics[scale=.45]{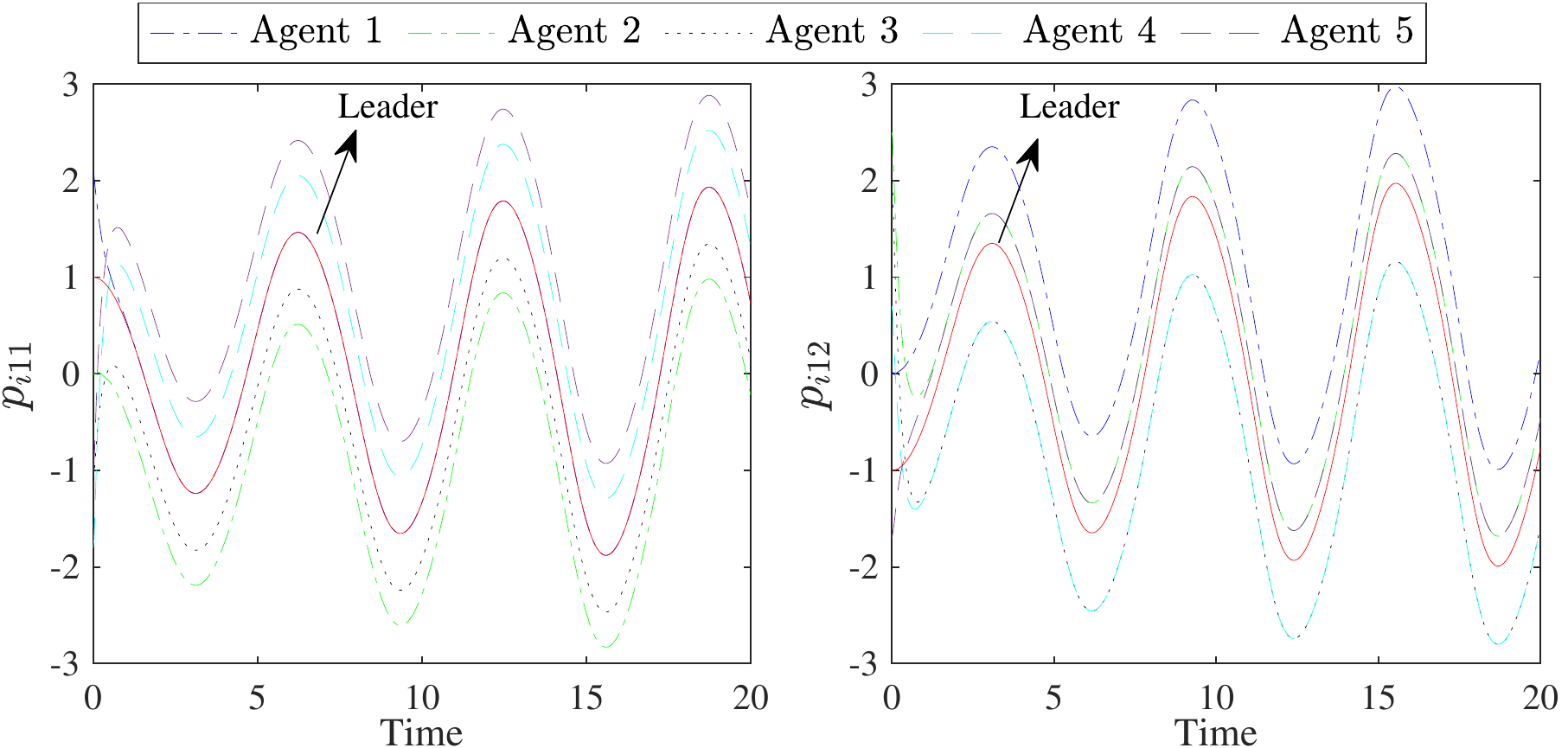}    
  \caption{Evolution of agents and the leader position states (leader is the solid line).}
   \label{fig5}
\end{figure}  
\begin{figure}[th]
    \centering
    \includegraphics[scale=.45]{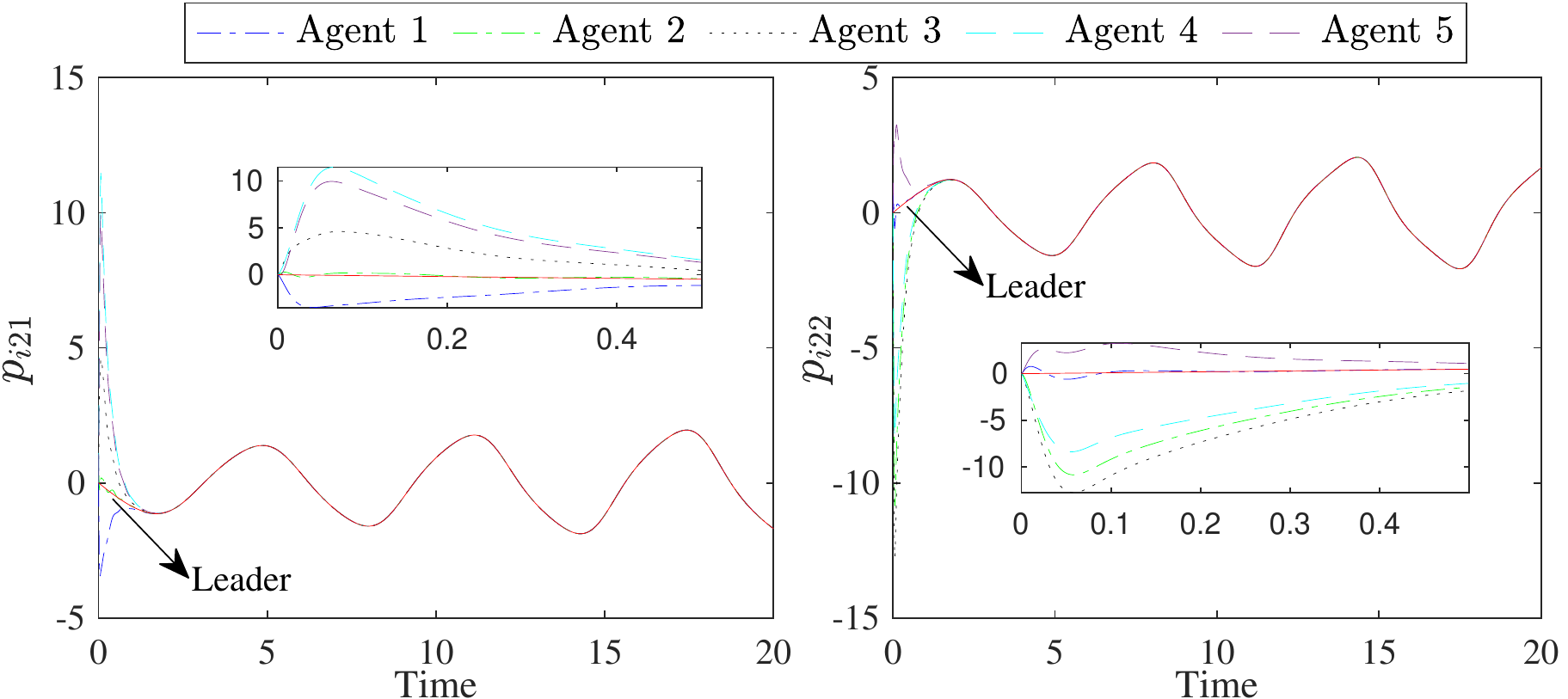}    
  \caption{Evolution of agents and the leader velocity states (leader is the solid line).}
   \label{fig6}
\end{figure}  

\begin{figure}[th]
    \centering
    \includegraphics[scale=.45]{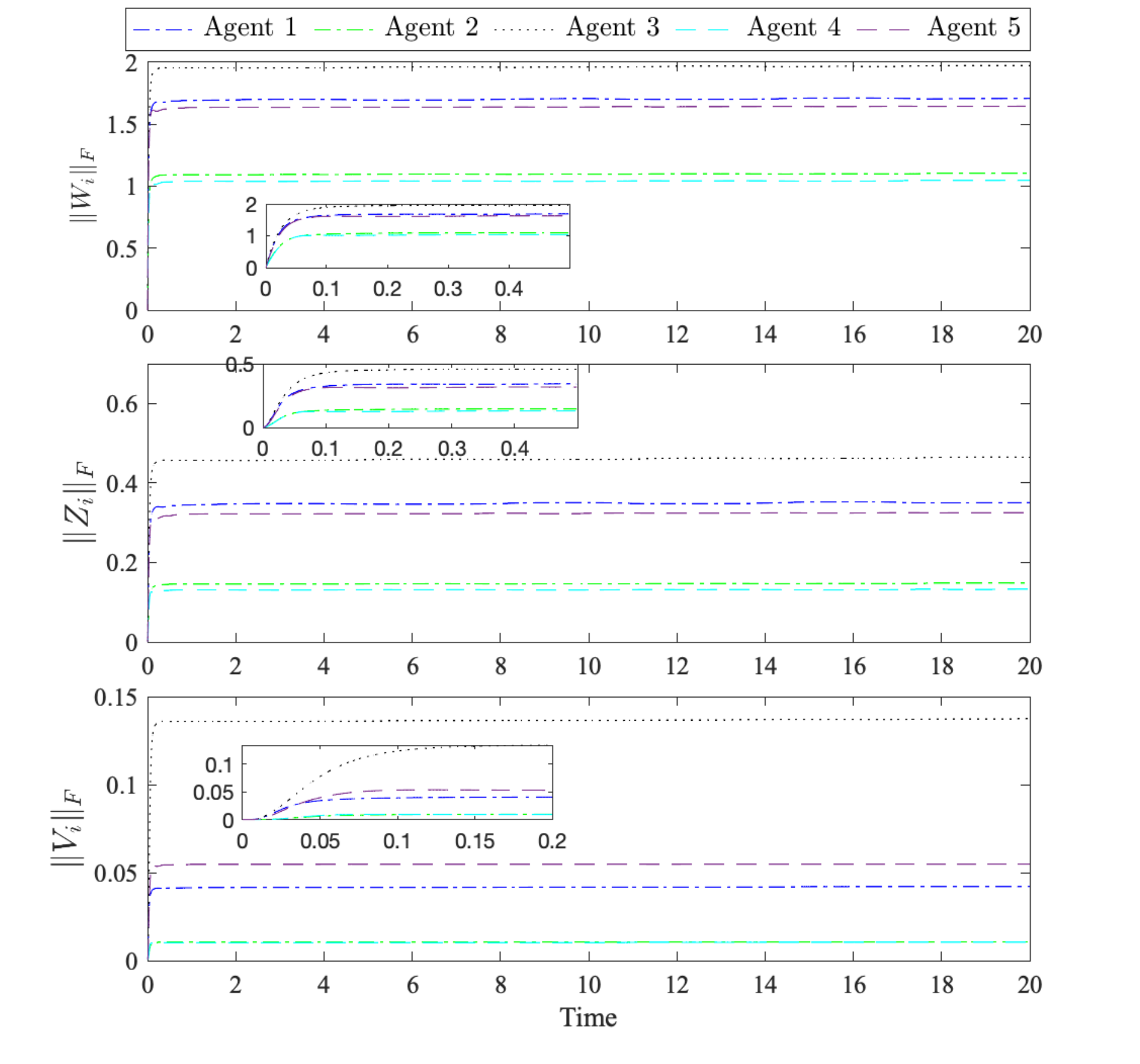}
  \caption{Frobenius Norm of NN weights matrices for agents.}
   \label{fig7}
\end{figure}  

\begin{figure}[th]
    \centering
    \includegraphics[scale=.45]{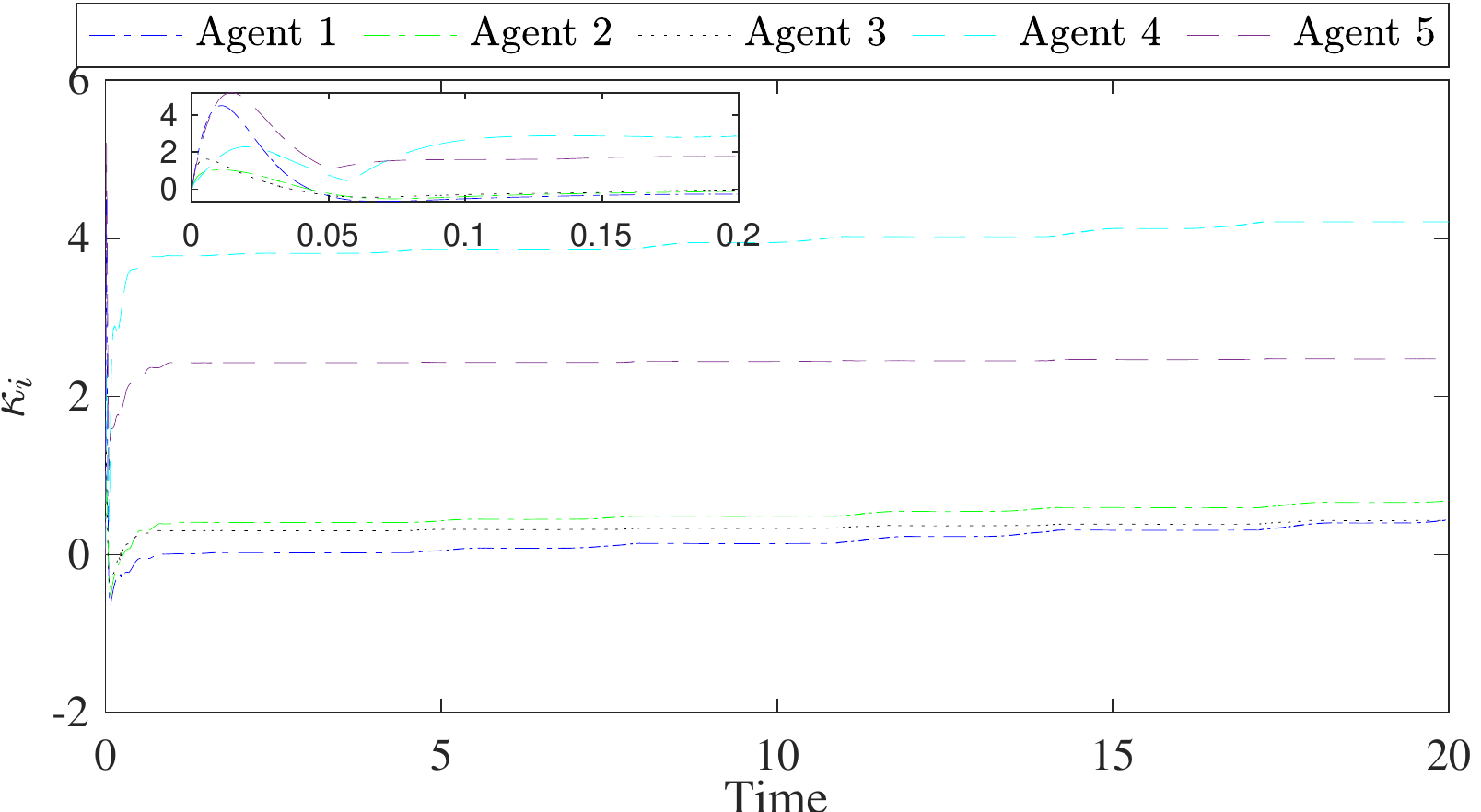}
  \caption{Time-variant gain $\kappa_i$ of each agent.}
   \label{fig8}
\end{figure}

\begin{figure}[th]
    \centering
    \includegraphics[scale=.45]{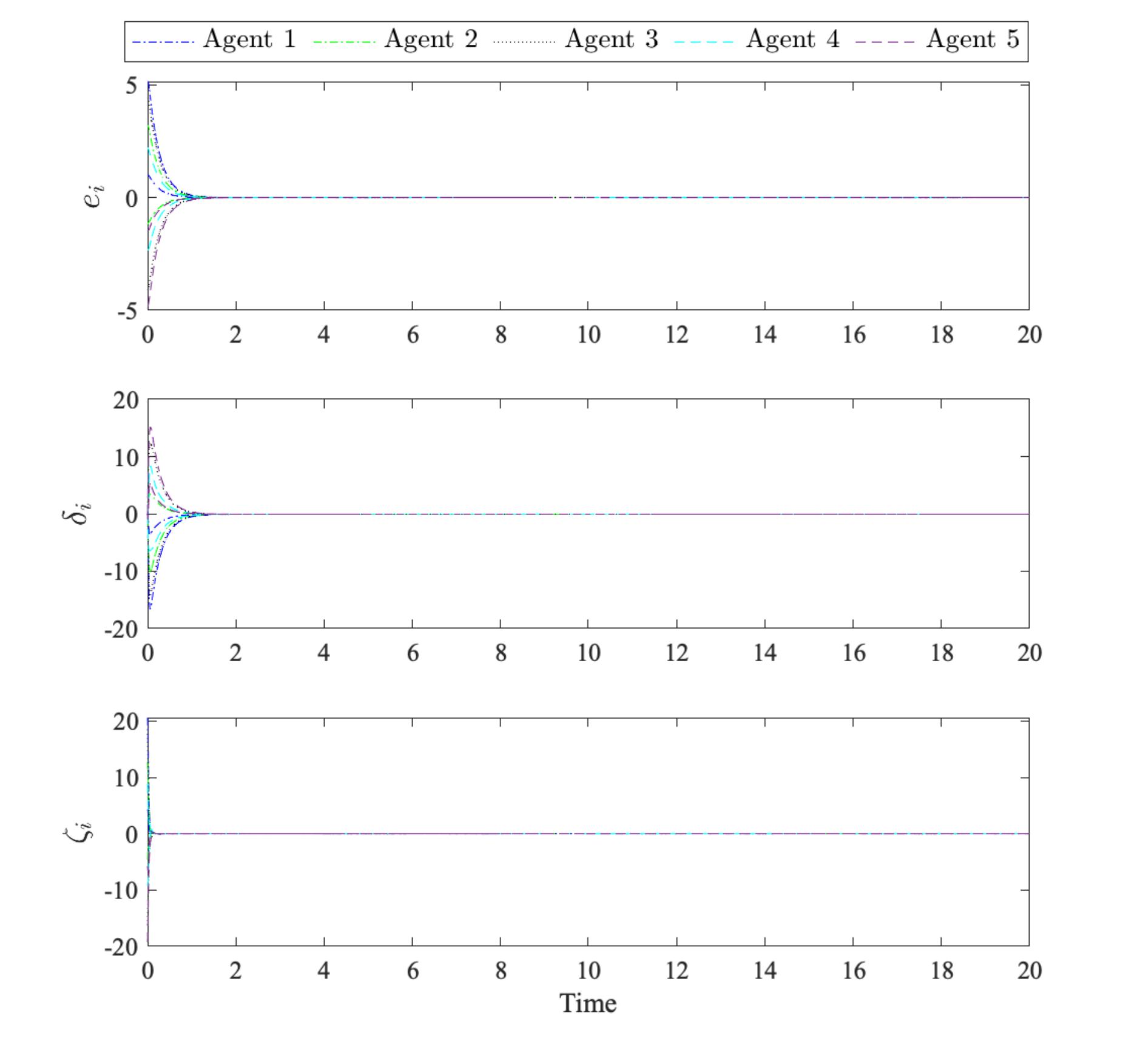}
  \caption{Error signals of $\vec{e}$, $\delta$, and $\zeta$ for each agent.}
   \label{fig9}
\end{figure}

\begin{figure}[th]
    \centering
    \includegraphics[scale=.45]{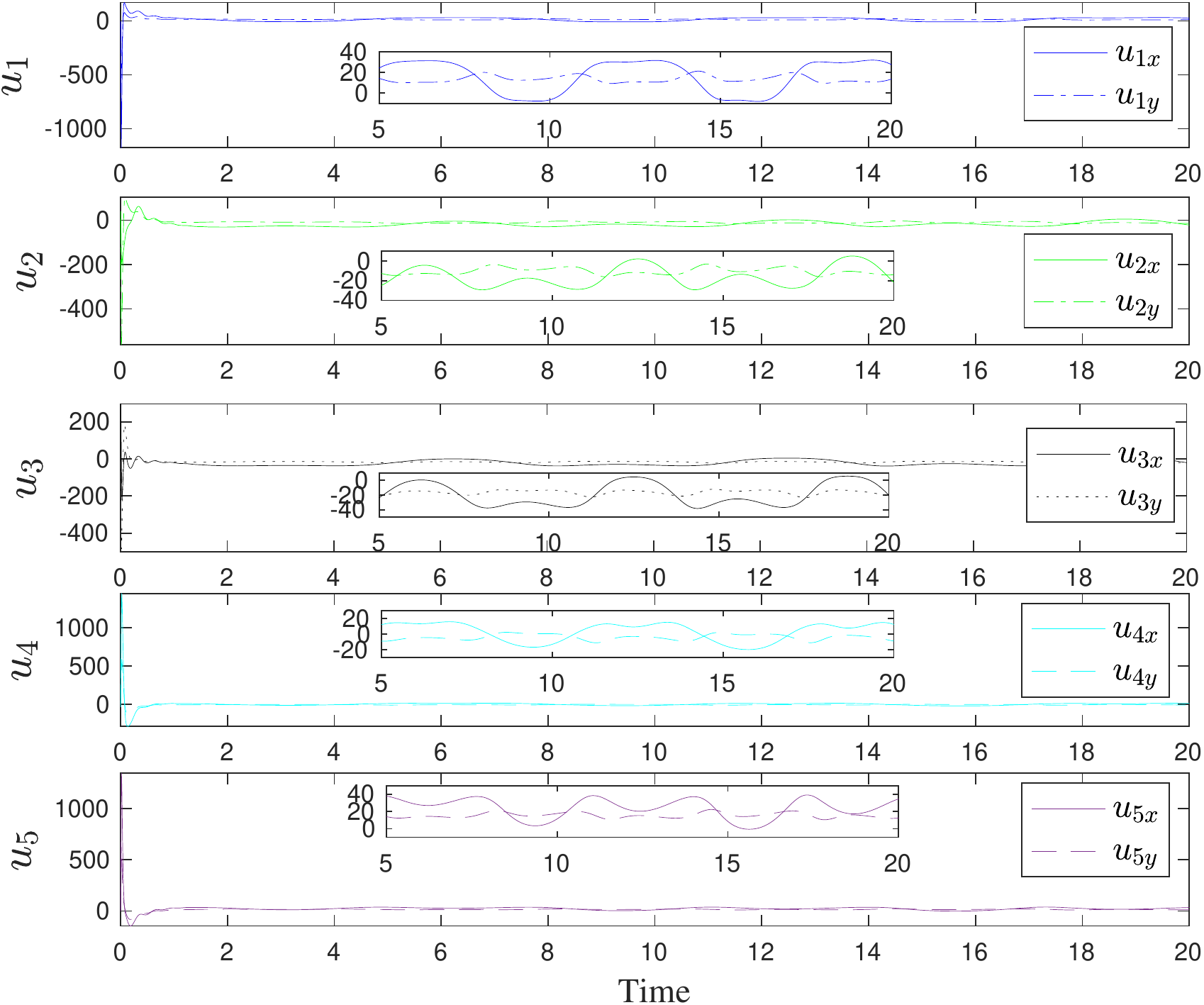}
  \caption{Control inputs of each agent.}
   \label{fig10}
\end{figure}

\begin{figure}[th]
    \centering
    \includegraphics[scale=.65]{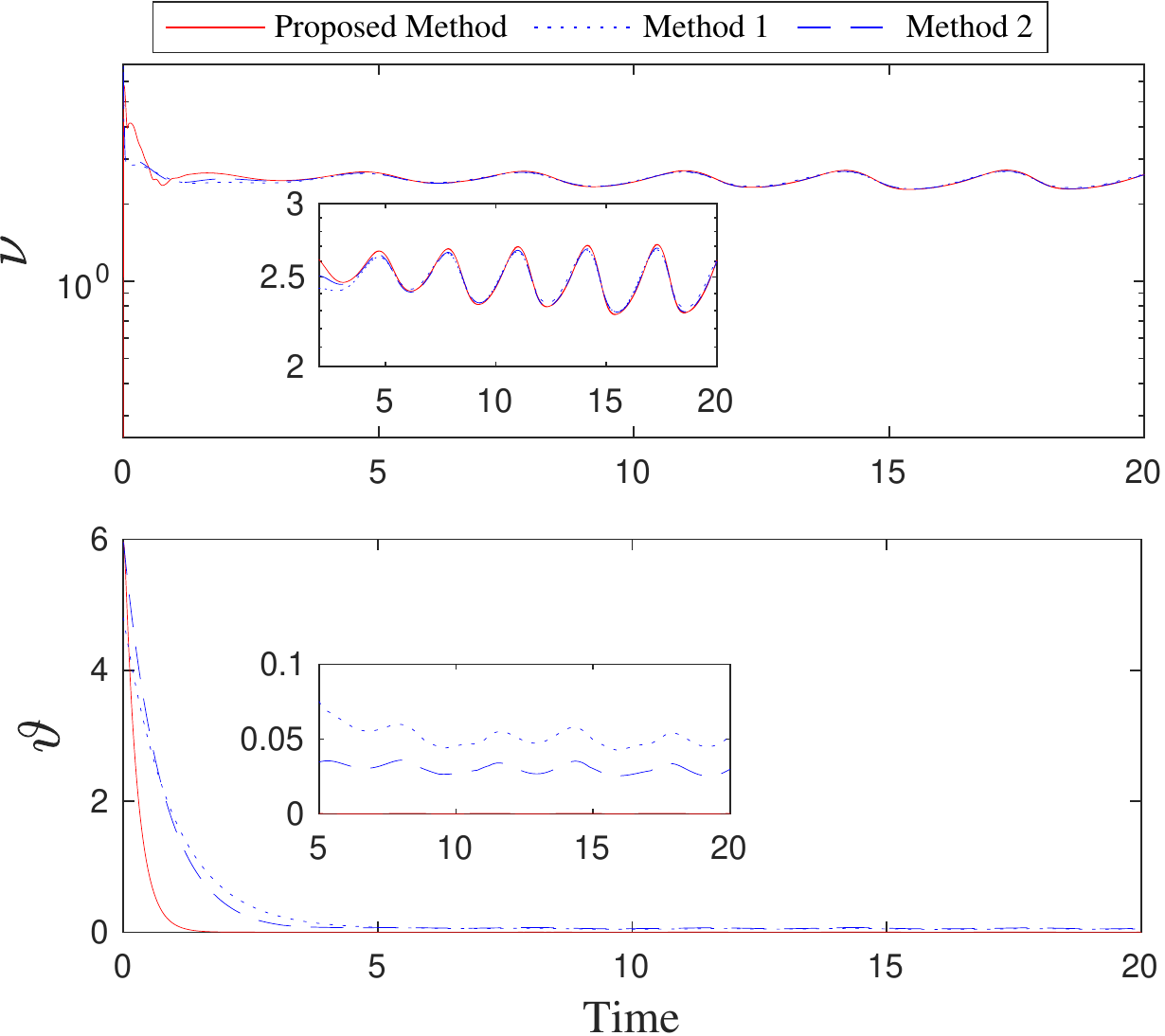}
  \caption{(a) ACI function comparison between proposed method and Method 1 \cite{IET} and Method 2 \cite{lewis} ($y$-axis is in logarithmic scale); (b) AFE function comparison between proposed method and Method 1 \cite{IET} and Method 2 \cite{lewis}.}
   \label{fig11}
\end{figure}

%%%%%%%%%%%%%%%%%%%%%%%%%%%%%%%%%%%%%%%%%%%%%%%%%%%%%%%%%%%%%%%%%%%%%%%%%%%%%%%%%%%%%%%%%%%%
%%%%%%%%%%%%%%%%%%%%%%%%%%%%%%%%%%%%%%%%%%%%%%%%%%%%%%%%%%%%%%%%%%%%%%%%%%%%%%%%%%%%%%%%%%%%
%%%%%%%%%%%%%%%%%%%%%%%%%%%%%%%%%%%%%%%%%%%%%%%%%%%%%%%%%%%%%%%%%%%%%%%%%%%%%%%%%%%%%%%%%%%%

\section{Conclusion}
This paper developed the leader-following formation control of the heterogeneous, second-order, uncertain, input-affine, nonlinear multi-agent systems modeled by a directed graph. The unknown nonlinearity in the dynamics of a multi-agent system was approximated by a tunable, three-layer NN consisting of an input layer, two hidden layers, and an output layer. The proposed method can \textit{a priori} set the number of neurons in each layer of NN. The NN weights tuning laws were derived using the Lyapunov theory. A robust integral of the sign of the error feedback control with an NN was developed to guarantee semi-global asymptotic leader tracking. The boundedness of the close-loop signals and asymptotic leader tracking formation were proven using the Lyapunov stability theory. The paper results were compared with two previous results, which showed the effectiveness of the proposed method.

\bibliographystyle{IEEEtran}
\bibliography{bibfile}

\end{document}